\newtheorem{definition}{Definition}
\newtheorem{assumption}{Assumption}
\newtheorem{remark}{Remark}
\newtheorem{theorem}{Theorem}
\newtheorem{lemma}{Lemma}
\newtheorem{proposition}{Proposition}
\newtheorem{corollary}{Corollary}
\newtheorem*{problem}{Problem}
\newtheorem{example}{Example}
\newcommand{\norm}[1]{\ensuremath{\left\| #1\right\|}}
\newcommand{\paren}[1]{\ensuremath{\left( #1\right)}}
\newcommand{\clint}[1]{\ensuremath{\left[ #1\right]}}
\newcommand{\set}[1]{\ensuremath{\left\{ #1\right\}}}
\newcommand{\vct}[1]{\ensuremath{\boldsymbol{#1}}}  
\newcommand{\matr}[1]{\ensuremath{\clint{\begin{array} #1 \end{array}}}}
\newcommand{\expe}[1]{\ensuremath{\mathbb{E}\set{#1}}}
\newcommand{\cov}[1]{\ensuremath{\mathrm{Cov}\set{#1}}}
\newcommand{\var}[1]{\ensuremath{\mathrm{Cov}\set{#1}}}
\newcommand{\infos}[2]{\ensuremath{\mathcal{I}^{#1}_{#2}}}
\newcommand{\info}[1]{\ensuremath{\mathcal{I}_{#1}}}
\newcommand{\x}[1]{\ensuremath{\vct{x}_{0:#1}}}
\newcommand{\ifo}[1]{\ensuremath{\mathcal{J}_k\paren{#1}}}
\newcommand{\EV}{\ensuremath{\mathcal{C}}}
\newcommand{\event}{\ensuremath{\mathcal{B}}}
\newcommand{\y}[1]{\ensuremath{z_{#1}}}
\newcommand{\g}[1]{\ensuremath{\vct{g}_{0:#1}}}
\renewcommand{\gg}[1]{\ensuremath{\tilde{\vct{g}}_{0:#1}}}
\newcommand{\1}{\ensuremath{\mathbbm{1}}}
\newcommand{\est}{\ensuremath{\eta}}
\newcommand{\estm}{\ensuremath{H}}
\newcommand{\REAL}{\ensuremath{\mathbb{R}}}
\DeclareMathOperator*{\argmin}{arg\,min}
\DeclareMathOperator{\Tr}{\mathbf{tr}}
\newcolumntype{"}{@{\hskip\tabcolsep\vrule width 2pt\hskip\tabcolsep}}
\begin{document}
\title{State-Secrecy Codes for \\ Networked Linear Systems}

\author{Anastasios~Tsiamis,~\IEEEmembership{Student Member,~IEEE,}
        Konstantinos~Gatsis,~\IEEEmembership{Member,~IEEE,}
        and~George~J.~Pappas,~\IEEEmembership{Fellow,~IEEE}
\thanks{The authors  are   with   the   Department   of   Electrical   and   Systems  Engineering,  University  of  Pennsylvania,  Philadelphia,  PA  19104.
		 Emails: \{atsiamis,kgatsis,pappasg\}@seas.upenn.edu
}
}

\maketitle
\begin{abstract}
In this paper, we study the problem of remote state estimation, in the presence of a passive eavesdropper. An authorized user estimates the state of an unstable linear plant, based on the packets received from a sensor, while the packets may also be intercepted by the eavesdropper. Our goal is to design a coding scheme at the sensor, which encodes the state information, in order to impair the eavesdropper's estimation performance, while enabling the user to successfully decode the sent messages. 
We introduce a novel class of codes, termed State-Secrecy Codes, which use acknowledgment signals from the user and apply linear time-varying transformations to the current and previously received states. By exploiting the properties of the system's process noise, the channel physical model and the dynamics, these codes manage to be fast, efficient and, thus, suitable for real-time dynamical systems.  
We prove that under minimal conditions, State-Secrecy Codes achieve perfect secrecy, namely the eavesdropper's estimation error grows unbounded almost surely, while the user's estimation performance is optimal. 
These conditions only require that at least once, the user receives the corresponding packet while the eavesdropper fails to intercept it. 
Even one occurrence of this event renders the eavesdropper's error unbounded with  asymptotically optimal rate of increase.
State-Secrecy Codes are provided and studied for two cases, i) when direct state measurements are available, and ii) when we only have output measurements.
The theoretical results are illustrated in simulations.
\end{abstract}

\begin{IEEEkeywords}
Eavesdropping, State-Secrecy Codes, Perfect secrecy, Kalman filtering
\end{IEEEkeywords}

\IEEEpeerreviewmaketitle

\section{Introduction}
The recent emergence of the Internet of Things (IoT) as a collection of interconnected sensors and actuators has created a new attack surface for adversarial attacks \cite{cardenas2008research}, \cite{sandberg2015cyberphysical}. Research efforts in the context of control systems have targeted denial-of-service attacks \cite{gupta2010optimal} and data integrity of compromised sensors \cite{pajic2014robustness}, \cite{mo2014resilient}, \cite{fawzi2014secure}, \cite{pasqualetti2015control}.  
However, another fundamental vulnerability of such interconnected systems is eavesdropping attacks, especially when the underlying medium of communication is of a broadcast nature, i.e. as in wireless systems \cite{Survey_Wireless_Security}. Eavesdroppers are usually passive adversaries, which intercept unauthorized data. This data leakage, not only compromises confidentiality, but could be also used to perform more complex attacks \cite{teixeira2015secure_resource}. 

In this paper, 
we study eavesdropping attacks in the context of real-time dynamical systems,  where the source of information is a dynamical system. In many IoT applications, sensors collect state information about the dynamical system and send it to an authorized user, i.e. a controller, a cloud server, etc. through a (wireless) channel, see Figure \ref{Figure_ProblemAbstract}. Our goal is to design codes such that the user receives the confidential state information, while any eavesdroppers are confused about the true state.  One of the main challenges in designing such codes for real-time dynamical systems is time itself; in general, it is desirable to avoid elaborate codes which might introduce severe delays to the data processing of the user. 
\begin{figure}[t] \centering{
		\includegraphics[scale=0.61]{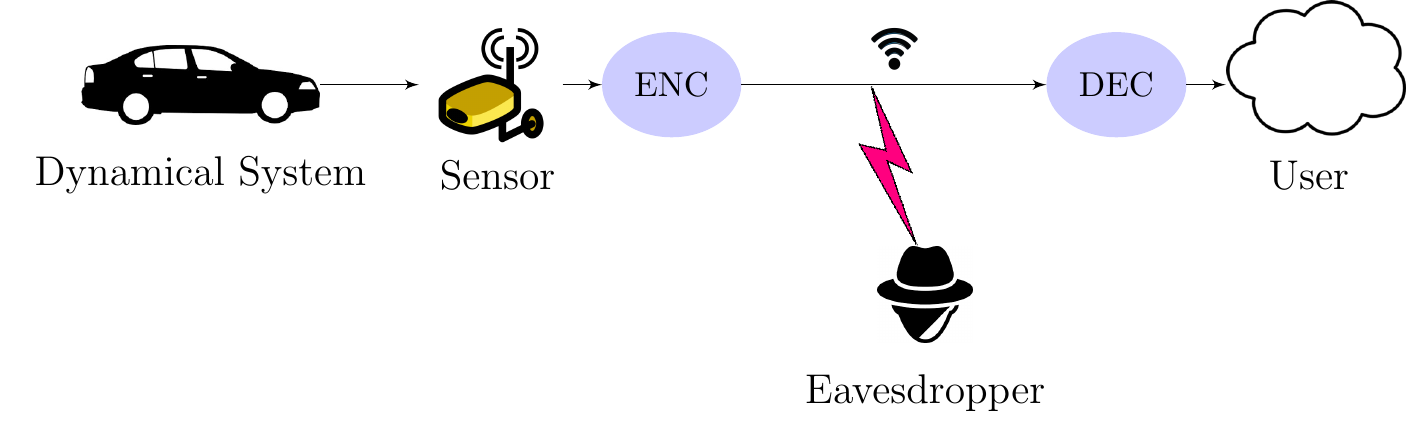}  
		\caption {This figure presents an abstract description of our problem. A sensor sends confidential information about the state of a dynamical system to an authorized user (i.e. a controller, the cloud) over a wireless channel. Meanwhile, an eavesdropper might intercept the sent messages. Our goal is to design encoder-decoder pairs such that confidentiality is protected.}
		\label{Figure_ProblemAbstract} }
\end{figure}

Most of the current defense mechanisms involve cryptography-based tools \cite{Cryptography} against computationally limited adversaries. However, these are generic tools that do not exploit the properties of the underlying physical process, i.e. the communication channel or the system's dynamics. Moreover, encryption may also introduce computation and communication overheads, which are significant for real-time systems \cite{lee2010price}. This motivates the development of other approaches for reinforcing confidentiality.

One such approach includes developing defense mechanisms in the physical layer of wireless communications \cite{Proc_IEEE_Special_issue, bloch2008wireless}. 
These methods exploit the characteristics of the underlying channel used for communication, e.g., the wireless medium, to offer provable guarantees about confidentiality against eavesdroppers. Information-theoretic approaches \cite{Wyner_wiretap, LiangPoor2008Secure, Secrecy_MIMO} define a notion of secrecy capacity of channels and give conditions about the existence of codes such that the eavesdropper receives no information. However, finding such codes is challenging in practice and requires knowledge of the eavesdropper's channel, which may not be available. Nonetheless, in the case of packet erasure channels, those erasures can be exploited to randomly create secrets, thus leading to more practical defense mechanisms~\cite{safaka2016creating}. 
Still, the aforementioned approaches typically involve static sources which is not the case in dynamical systems.
Recently, in \cite{li2011communication}, \cite{Wiretap_estimation}, \cite{wiese2016secure}, information-theoretic tools were applied in the case of dynamical systems, to the problem of remote estimation in the presence of eavesdroppers. 
As with the static case, 
those approaches also face the challenge of finding codes and requiring knowledge about the eavesdropper's channel. 

A control-theoretic approach was recently employed in \cite{tsiamis2016state}, \cite{leong2017remote}, 
where the performance metric of both the user and the eavesdropper is minimum mean square error (mmse). This framework does not use any encoding. Instead, a secrecy mechanism is employed, which withholds information, either randomly \cite{tsiamis2016state} or deterministically \cite{leong2017remote}. In the case of unstable systems, under certain conditions, the eavesdropper's expected error can grow to infinity  while the user's expected error remains bounded. However, the main disadvantage of those works is that the guarantees about the eavesdropper are in expectation, not almost surely, while the user's performance is degraded as a side effect.

In this paper, we develop a novel class of codes, suitable for real-time dynamical systems, which we call State-Secrecy Codes. When direct state measurements are available,  the system's state is encoded by subtracting a weighted version of the user's most recently received state from the current state. This only requires acknowledgment signals from the user back to the sensor.  By exploiting the inherent process noise of the physical system, the channel's randomness as well as the dynamics, State-Secrecy Codes achieve very strong guarantees with very low computational cost. 
In particular, confidentiality is guaranteed if at some time the user receives the encoded state, while the eavesdropper fails to intercept it. Due to our code, a single occurrence of this event, which we call critical event, makes the eavesdropper lose important information about the system state. Then, as time passes, the dynamics amplify the uncertainty of the eavesdropper created by this information loss, regardless of the eavesdropper's computational capabilities.

 In Section \ref{Section_Formulation} we present the problem formulation. The dynamical system is modeled as linear, while the channel, is modeled as a packet drop channel. Similar to \cite{Wiretap_estimation, wiese2016secure ,tsiamis2016state}, we assume that the system is unstable (see Section \ref{Section_Formulation} for discussion). Both the user and the eavesdropper know the coding scheme and use minimum mean square error estimators as decoders. We also introduce a novel control-theoretic notion of perfect secrecy, requiring that the eavesdropper's error grows unbounded almost surely, while the user's performance is optimal. 
In Section \ref{Section_Perfect}, where direct state measurements are available, we show that by employing State-Secrecy Codes, perfect secrecy can be guaranteed under remarkably mild conditions (Theorem \ref{THM_perfect_secrecy}). It is sufficient for the critical event, where the user receives the corresponding packet while the eavesdropper misses it, to occur at least once. Even one occurrence renders the eavesdropper's error unbounded with asymptotically optimal rate of increase (Corollary \ref{THM_rate_of_increase}).
In Section \ref{Section_Output}, we extend the results to the case of output measurements (Theorem \ref{THM_output}). The sensor performs local Kalman filtering before applying a State-Secrecy Code to the local estimates. 

In summary,
our main contributions, are the following: 
\begin{itemize}
	\item We introduce State-Secrecy Codes, which are fast and efficient, thus, suitable for real-time dynamical systems. Their efficiency does not depend on the computational capabilities of the eavesdropper, contrary to encryption techniques \cite{Cryptography}.
	\item The codes  achieve unbounded eavesdropper's error almost surely, while the user's estimation performance is optimal. This supersedes the results in \cite{tsiamis2016state, leong2017remote} where unbounded eavesdropper's error is achieved only in expectation and the user's performance is degraded.
	\item The condition for perfect secrecy is remarkably minimal, requiring just a single occurrence of the critical event. 
	\item The assumptions about the channel are minimal. Thus, our results are distribution free.
\end{itemize}

We conclude this paper by illustrating the performance of the State-Secrecy Codes in simulations in Section \ref{Section_Simulations}, and with remarks in Section \ref{Section_Conclusion}. All proofs are included in the Appendices.

\section{Problem formulation}\label{Section_Formulation}
\subsection{Dynamical system model}
The considered remote estimation architecture is shown in Figure \ref{Figure_ProblemSetup}  and consists of a sensor observing a dynamical system, a legitimate user, and an eavesdropper. The dynamical system is linear and has the following form:
\begin{align} \label{EQN_system}
x_{k+1}&=Ax_{k}+w_{k+1}\\
y_{k}&=Cx_{k}+v_{k}\label{EQN_output}
\end{align}
where $x_{k}\in \REAL^{n}$ is the state, $y_k\in \REAL^m$ is the output and $k\in \mathbb{N}$ is the (discrete) time. Matrices $A\in\REAL^{n\times n}$ and $C\in \REAL^{m\times n}$ are the system and output matrices respectively. Signals $w_{k}\in \REAL^{n}$ and $v_k\in \REAL^{m}$ are the process and measurement noises respectively and are modeled as independent Gaussian random variables with zero mean and covariance matrices $Q$ and $R$ respectively. The initial state $x_{0}$ is also a Gaussian random variable with zero mean and covariance $\Sigma_{0}$ and is independent of the noise signals. 
Matrices $Q,\,R,\,\Sigma_{0}$ are assumed to be 
positive definite (unless otherwise stated). In more compact notation $Q,\,R,\,\Sigma_{0}\succ 0$, where $\succ$ ($\succeq$) denotes comparison in the positive definite (semidefinite) cone.   We also assume that the pair $\paren{A,C}$ is detectable. 
All system and noise parameters $A, C, Q, R, \Sigma_0$  are assumed to be public knowledge, available to all involved entities, i.e., the sensor, the user, and the eavesdropper. 
We also note that we consider a common probability space $\Omega$ for all random quantities (noises, initial condition and channel outcomes). 
Throughout this paper, we assume the system is unstable. 
 \begin{assumption}\label{ASSUM_unstable}
 	We assume that the dynamical system \eqref{EQN_system} is unstable  i.e., its spectral radius is $\rho(A) = \max_{i} |\lambda_i(A)|>1$. 
\end{assumption}
From a security point of view, we can achieve much better confidentiality when the system is unstable; without any measurements, the unstable dynamics amplify the uncertainty caused by the process noise. On the other hand, when the system is stable, the problem is more challenging. The eavesdropper can always predict that a stable system is close to equilibrium without even eavesdropping. We do not deal with stable systems in this paper, but it is subject of ongoing work.

\begin{figure}[t] \centering{
		\includegraphics[scale=0.75]{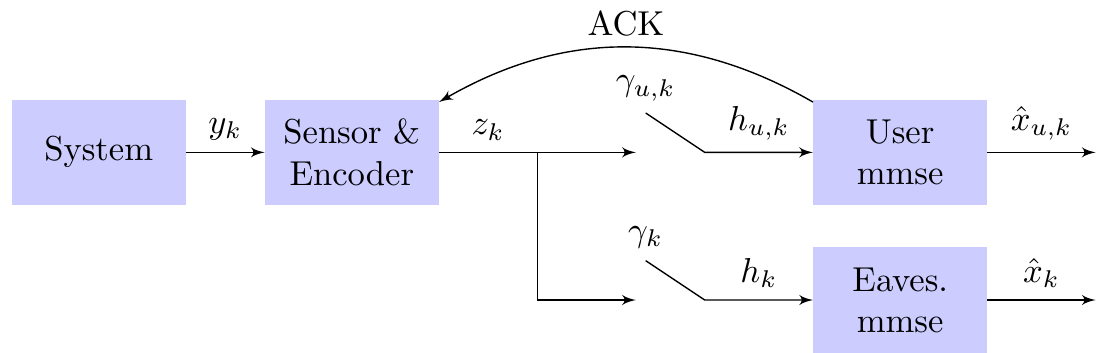}  
		\caption {A sensor collects the output $y_k$ of the dynamical system \eqref{EQN_system}. Then it transmits an encoded version  $\y{k}$ of the output to the channel, which is neither reliably nor securely received by the user. The packets might be dropped, as captured by $\gamma_{u,k}$, and might be intercepted by the eavesdropper, as captured by $\gamma_{k}$. To decode the messages, the user and the eavesdropper compute the minimum mean square error (mmse) estimates $\hat{x}_{u,k}$ and $\hat{x}_{k}$ respectively.}
		\label{Figure_ProblemSetup} }
\end{figure}

\subsection{Channel model}
The sensor communicates over a channel with two outputs/receivers as shown in Figure \ref{Figure_ProblemSetup}.  The input to the channel is denoted by $\y{k}\in\REAL^{n}$.  
The first output, denoted by $h_{u,k}$, is the authorized one to the user, while the second, denoted by $h_{k}$, is the unauthorized one to the eavesdropper.
Communication follows the packet-based paradigm commonly used in networked control systems \cite{Sinopoli2004Kalman}, \cite{hespanha2007survey}. We assume that the packets consist of sufficiently large number of bits to neglect quantization effects~\cite{Sinopoli2004Kalman, hespanha2007survey, GatsisEtal14}.

Communication with the user is unreliable, i.e., may undergo packet drops. Additionally, communication is not secure against the eavesdropper, i.e., the latter may intercept transmitted packets. 
In particular, we denote by $\gamma_{u,k}\in\set{0,1}$ the outcome of the user packet reception at time $k$, and by $\gamma_{k}\in\set{0,1}$ the outcome of the eavesdropper's packet interception. When $\gamma_{u,k}=1$ (or $\gamma_{k}=1$), then the reception (interception) is successful. Otherwise the reception (interception) is not successful and the respective packet is dropped.  Thus, the outputs of the channel are modeled as:
\begin{equation}\label{EQN_channel_model} 
\begin{aligned}
h_{u,k}&=\left\{
\begin{aligned}\y{k}, \quad &\text{ if } \gamma_{u,k}=1\\
\varepsilon,  \quad &\text{ if } \gamma_{u,k}=0
\end{aligned}\right.\\
h_{k}&=\left\{
\begin{aligned}\y{k}, \quad &\text{ if } \gamma_{k}=1\\
\varepsilon,  \quad &\text{ if } \gamma_{k}=0
\end{aligned}\right.
\end{aligned}
\end{equation}
for $i=1$, $2$, where
symbol $\varepsilon$, is used to represent the ``no information" outcome. The variables $\{\gamma_{u,k},\gamma_{k},\, k=0,1,\dots\}$ are modeled as random and assumed to be independent of the initial state $x_0$, the process noise $w_k$, and the measurement noise $v_k$, for $k=0,1,\dots$. We do not assume any specific joint distribution of the channel outcomes; as explained in Section \ref{Section_Perfect} the results of this paper (Theorems~\ref{THM_perfect_secrecy}, \ref{THM_output}) are distribution-free.

In addition to the main channel, the user can reliably send  acknowledgment signals back to the sensor via the reverse channel. Thus, at any time step the sensor knows what is the latest received message $\y{k}$ at the user. 
Meanwhile, we assume that the eavesdropper is able to intercept all acknowledgment signals, and thus, knows the history of user's packet successes. In that respect, we model a powerful eavesdropper. 
On the other hand, neither the sensor nor the user have any knowledge about the eavesdropper's intercept successes $\gamma_{k}$.
\subsection{Encoder definition}
The sensor collects output measurements $y_k$ and encodes them by sending $\y{k} \in \mathbb{R}^p$ over the channel at each time step $k$, where $p$ is an integer to be designed. Hence, under this definition and by \eqref{EQN_channel_model}, the channel outputs $h_{u,k}$, $h_{k}$ take values in $\REAL^{p}\cup\set{\varepsilon}$.
In general, the encoder may produce $\y{k}$, given all the information at the sensor at time $k$, i.e. current and past outputs  $y_{t}$ for $t\leq k$, past sent messages $\y{t}$ for $t< k$, as well as past user's channel outcomes $\gamma_{u,t}$ for $t < k$:
\begin{equation}\label{EQN_DEF_coding}
z_k=f_k\paren{y_k,y_t,z_t,\gamma_{u,t},t<k},
\end{equation}
where $f_k$ is a function from $\REAL^{m(k+1)+pk}\times \set{0,1}^{k}$ to $\REAL^{p}$. 
Although this allows encoders with infinite memory, our proposed one (see Sections \ref{Section_Perfect}, \ref{Section_Output}) does not need the whole history and, thus, only uses finite memory. 
\subsection{MMSE Estimation}
Both the user and the eavesdropper know the encoding scheme
and use the minimum mean square error (mmse) estimate to decode the received/intercepted messages. 
This estimate depends on their information up to time $k$. So, we define the batch vector of received channel outputs $\vct{h}_{u,0:k}=\paren{h_{u,0}, \ldots, h_{u,k}}$ and channel outcomes $\vct{\gamma}_{u,0:k}=\paren{\gamma_{u,0},\dots,\gamma_{u,k}}$ for the user. Similarly, for the eavesdropper, we define $\vct{h}_{0:k}=\paren{h_{0}, \ldots, h_{k}}$ and  $\vct{\gamma}_{0:k}=\paren{\gamma_{0},\dots,\gamma_{k}}$.
Then, the user's information at time $k$ is denoted by $\mathcal{I}^u_k=\{\vct{h}_{u,0:k} \}$, with $\mathcal{I}^u_{-1}=\emptyset$. Respectively, we denote the eavesdropper's information at time $k$ by
\begin{equation}\label{EQN_eavesdropper_information}
\info k=\set{\vct{h}_{0:k},\vct{\gamma}_{u,0:k}},\,\info{-1}=\emptyset
\end{equation} 
  The two information sets are not symmetric, i.e. the eavesdropper has the additional information of the user's reception success history.
With those definitions,  the mean square error estimate, $\hat{x}_{u,k}$, at the user and the respective estimation error covariance matrix $P_{u,k}$ are given by:
\begin{align}\label{EQN_estimate_definition_user}
\hat{x}_{u,k}&=\expe{ x_{k}\vert\infos u k},\quad
P_{u,k}=\var{x_{k}\vert \infos u k}
\end{align}
where the conditional covariance of any random vector $Z$ with respect to some other random vector $\mathcal{I}$ is defined as 
$$\var{Z|\mathcal{I}}=\expe{\paren{Z-\expe{Z|\mathcal{I}}}\paren{Z-\expe{Z|\mathcal{I}}}'|\mathcal{I}}.$$
Similarly, the eavesdropper's mean square error estimate, $\hat{x}_k$ and the respective estimation covariance matrix $P_k$ are:
\begin{align}\label{EQN_estimate_variance_eavesdropper}
\hat{x}_{k}=\expe{ x_{k}\vert \info k},\quad P_{k}=\var{x_{k}\vert\info k}.
\end{align}

\subsection{Problem}
The goal of this work is to design the encoding scheme at the sensor, so that \emph{perfect secrecy} is achieved, introduced in the following definition. We require the eavesdropper's estimation error to grow unbounded, while the user successfully decodes the information and has optimal estimation performance. 
The user's estimation scheme is optimal when at the successful reception times, the estimate and estimation error covariance are the same as if no packet had been dropped in the past (see also \cite{gupta2007optimal}). 
\begin{definition}[Perfect Secrecy]\label{DEF_perfect_secrecy}
	Given system \eqref{EQN_system}, \eqref{EQN_output} and channel model \eqref{EQN_channel_model}, we say that a coding scheme \eqref{EQN_DEF_coding} achieves perfect secrecy if and only if both of the following conditions hold:
	\renewcommand{\labelenumi}{(\roman{enumi})}
	\begin{enumerate}
		\item the user's performance is optimal: \begin{equation}\label{EQN_optimal_estimation_user}
		\left. \begin{aligned}
		\hat{x}_{u,k}&=\expe{x_k|\vct{y}_{0:k}}\\
		P_{u,k}&=\var{x_k|\vct{y}_{0:k}}
		\end{aligned}\right\},\,\text{when }\gamma_{u,k}=1,
		\end{equation}
		where $\vct{y}_{0:k}=\paren{y_0,\dots,y_k}$. 
		\item the eavesdropper's error diverges to infinity with probability one:	\begin{equation}\label{EQN_Definition_Perfect_Secrecy_1}
		\Tr{P_{k}}\stackrel{a.s.}{\rightarrow}\infty\footnote{The limit $\stackrel{a.s.}{\rightarrow}$ denotes almost sure convergence \cite{durrett2010probability} with respect to the common probability space $\Omega$, as $k\rightarrow \infty$.},
		\end{equation}
		where  $\Tr$ is the trace operator. 
	\end{enumerate}
\end{definition}
This notion of secrecy is asymptotic, which is an inherent property of the problem. Even without any interceptions, the eavesdropper can maintain the trivial open-loop prediction estimate, i.e. $\hat{x}_k=0$, that has unbounded but finite estimation error at any time $k$.
 Moreover, we remark that \eqref{EQN_Definition_Perfect_Secrecy_1} guarantees aggregate state secrecy in that, at least one but not necessarily all eigenvalues of the eavesdropper's error covariance grow unbounded, e.g., she might still be able to estimate a stable part of the state with bounded error (see also Section \ref{Section_Simulations}).  With this definition, we formally present the problem that we solve in this paper.
\begin{problem}
 	Given system \eqref{EQN_system}, \eqref{EQN_output} and channel model \eqref{EQN_channel_model}, design a coding scheme \eqref{EQN_DEF_coding} such that perfect secrecy is achieved, as described in Definition \ref{DEF_perfect_secrecy}.
\end{problem}
In the following sections, we consider present and analyze State-Secrecy Codes for two cases. In Section \ref{Section_Perfect}, we ignore the output model ($C=I$, $R=0$) and assume that the sensor measures the state perfectly. In Section \ref{Section_Output}, we include the output model (general $C$, $R\succ 0$). In both cases, perfect secrecy is achieved by exploiting the acknowledgment signals, the unstable system dynamics, the process noise, as well as the randomness of the channel. 
\section{Perfect Secrecy with State Measurements}\label{Section_Perfect}
In this section, we introduce State-Secrecy Codes for the case of direct state measurements ($C=I$ and $R=0$). 
Informally, the sensor encodes and transmits the current state measurement $x_k$ as a weighted state difference of the form $x_k-A^{k-t_k}x_{t_k}$, where $x_{t_k}$ is a previous state called the \emph{reference state} of the encoded message, for some $t_k<k$ depending on $k$. The sensor and the user can agree on this reference state via the acknowledgment signals, e.g., it can be the most recent state received at the user's end. At the user's side, no information is lost with this encoding; upon receiving a new message $x_k-A^{k-t_k}x_{t_k}$, she can first recover $x_{k}$ by adding $A^{k-t_k}x_{t_k}$ and then notify the sensor to use $x_k$ as the reference state for the next transmission.
 
On the other hand, on the event that the eavesdropper fails to intercept that reference packet $x_{t_k}$ at time $t_k$, her error starts increasing. That is because the eavesdropper misses the reference state $x_{t_k}$ and, thus, cannot decode a following packet of the form $x_k-A^{k-t_k}x_{t_k}$ to obtain $x_k$. But this also obstructs the eavesdropper from decoding future packets, as any following reference state $x_k$ for some $k>t_k$, depends on the current reference state $x_{t_k}$ and so on. This triggers an irreversible chain reaction effect,  which combined with the unstable system dynamics, leads to an exponentially growing eavesdropper's estimation error. For this reason, we call this event, where the user receives a packet at time $t_k$ while the eavesdropper misses it, the \emph{critical event}.

The following definitions formally describe our coding scheme.
	We define the \emph{reference time}  $t_{k}$ to be the time of the most recent successful reception at the user before $k$:
	\begin{equation}\label{EQN_reference_time}
	t_{k}=t_{k}(\vct{\gamma}_{u,0:{k-1}})=\max\set{0\le t<k:\gamma_{u,k}=1}.
	\end{equation}
	Until the first successful transmission, i.e. when the set $\max \set{0\le t< k:\: \gamma_{u,t}=1}$ is empty, we use the convention $t_{k}=-1$.
	Respectively, $x_{t_{k}}$ is called the \emph{reference state}, with $x_{-1}=0$.
\begin{definition}[State-Secrecy Code]\label{DEF_name_codes}
	Given the unstable system matrix $A$ in \eqref{EQN_system}, a State-Secrecy Code applies the following time-varying linear operation \begin{equation} \label{EQN_Coding}
	\y{k}=x_k-A^{k-t_k}x_{t_k},
	\end{equation}
	where $t_k$ is the reference time defined in \eqref{EQN_reference_time}.\hfill $\diamond$
\end{definition}
The intuition about selecting the weighting factor $A^{t-t_k}$ can be found in Remark~\ref{REM_intuition}. The implementation of the scheme is described in Algorithm~\ref{ALG_coding}. The sensor keeps in memory the reference time $t_k$ and state $x_{t_k}$, with $t_{0}=-1$, $x_{-1}=0$. At each time $k$,  it transmits $z_k$ as in \eqref{EQN_Coding}.
If the user receives the packet successfully, it sends an acknowledgment signal back to the sensor. In this case, the sensor updates the reference time $t_{k+1}=k$. Otherwise, it keeps $t_{k+1}=t_{k}$. 
The memory required for the encoder is minimal ($\mathcal{O}(n)$) and the only computational burden is a matrix-vector multiplication~($\mathcal{O}(n^2)$).
 \begin{algorithm}[!t]
 	\caption{State-Secrecy Code}
 	\label{ALG_coding}
 	\begin{algorithmic}[1] {}
 		\Require $A$ and $\,x_k$ at each $k\ge 0$
 		\Ensure Encoded signals $\y{k}$, for all $k\ge 0$.
 		\Statex Let $t$ represent the time of user's most recent message.
 		\State Initialize $t=-1$, $x_{-1}=0$
 		\For{$k=0,1,\dots$ } 
 		\State{Transmit $\y{k}=x_{k}-A^{k-t}x_{t}$}
 		\If{Acknowledgment received} $t=k$ \EndIf
 		\EndFor
 	\end{algorithmic}
 \end{algorithm}
The critical event formally defined below, is crucial for reinforcing secrecy with our coding scheme.
\begin{definition}[Critical event]\label{DEF_critical_event}
	A critical event occurs at time $k$ if the user receives the packet, while the eavesdropper fails to intercept it:
	\begin{equation}\label{EQN_critical_event}
	\gamma_{u,k}=1,\,\gamma_{k}=0
	\end{equation}
\end{definition}

An example to clarify the coding scheme and the critical event is presented next.

\begin{example} 
	Suppose that for $k=0,1,2,3$ we have the channel outcomes as shown in the first two rows of the following table:
	$$\begin{array} {|l"r|r|c|c|}\hline k& 0&1&2&3\\\hline 
	\text{user }\gamma_{u,k}& 0 &1&1&1\\ \hline \text{eavs. }\gamma_{k} & 1 &0&1&1\\ \hline 
	t_k & -1& -1&1&2\\ \hline 
	z_k & x_0& x_1 &x_2-Ax_1&x_3-Ax_2\\ \hline
	\text{user }h_{u,k} & \varepsilon& x_1 & x_2-Ax_1&x_3-Ax_2\\ \hline
	\text{eavs. }h_{k} & x_0& \varepsilon &x_2-Ax_1&x_3-Ax_2\\ \hline
	\end{array}$$
	Then, the last four  rows of the table are constructed using the definitions of the reference times \eqref{EQN_reference_time}, of the coding scheme \eqref{EQN_Coding}, and the channel outcomes \eqref{EQN_channel_model}.
	Notice that the critical event occurs at time $k=1$, where the user receives $x_1$, while the eavesdropper misses it. Then, the user can  recover $x_2$ at time $k=2$, adding $Ax_1$ to $h_{u,2}$. However, since the eavesdropper does not know $x_1$, she cannot precisely recover $x_2$. Since $\gamma_{u,2}=1$, $x_2$ is the next reference state at time $k=3$. Thus, the eavesdropper will also not be able to recover $x_3$, from $h_{3}=x_3-Ax_2$. Hence, a single occurrence of the critical event impairs future estimation at the eavesdropper.\hfill $\diamond$
\end{example}
Our first result, for the case of state measurements, formally proves the previous observations and is presented in the following theorem. If the critical event $\set{\gamma_{u,k_0}=1,\,\gamma_{k_0}=0}$ occurs at some time $k_0$, then the 
eavesdropper's error starts to grow unbounded exponentially fast. On the other hand, the user's performance is optimal.
\begin{theorem}[Perfect secrecy]\label{THM_perfect_secrecy} 
Consider system \eqref{EQN_system}, with channel model \eqref{EQN_channel_model} and coding scheme \eqref{EQN_Coding}. 
If
\begin{equation}\label{EQN_theorem_condition}
	\mathbb{P}( \gamma_{u,k}=1,\,\gamma_{k}=0,\text{ for some }k\ge 0 )=1, 
\end{equation}
then:
\renewcommand{\labelenumi}{(\roman{enumi})}
\begin{enumerate}
	\item perfect secrecy is achieved according to Definition \ref{DEF_perfect_secrecy}.
	\item conditioned on the event $\set{\gamma_{u,k_0}=1,\,\gamma_{k_0}=0}$ for some $k_0\ge 0$, the eavesdropper's estimation error grows unbounded for $k \geq k_0$ as 
	\begin{equation}\label{EQN_big_o_result}
		\Tr{P_k}\ge c \rho\paren{A}^{2\paren{k-k_0}},
	\end{equation}
\end{enumerate}
where $P_k$ is the error covariance defined in \eqref{EQN_estimate_variance_eavesdropper} and $c>0$ is a constant independent of $k_0$.
\hfill $\diamond$
\end{theorem}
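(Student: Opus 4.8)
The plan is to split the proof into the two claims, handling the user's optimality first (the easy part) and then the eavesdropper's divergence (the crux).

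\textbf{Part (i), user optimality.} Since the code \eqref{EQN_Coding} is an invertible affine transformation of $x_k$ given the reference state $x_{t_k}$, and since the reference state is itself a (recursively decodable) function of past received packets, I would argue by induction on the successful reception times that the user's information set $\mathcal{I}^u_k$ is, on the event $\gamma_{u,k}=1$, informationally equivalent to $\{y_t : \gamma_{u,t}=1,\ t\le k\}$ together with the recoverable reference states — and in fact equivalent to the full $\vct{y}_{0:k}$ whenever $\gamma_{u,k}=1$, because once the user holds $x_{t_k}$ and receives $z_k$ it reconstructs $x_k$ exactly, and by induction $x_{t_k}$ carries the same information as all states up to $t_k$. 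Hence $\hat x_{u,k}=\expe{x_k|\vct y_{0:k}}$ and $P_{u,k}=\var{x_k|\vct y_{0:k}}$ on $\{\gamma_{u,k}=1\}$, which is exactly \eqref{EQN_optimal_estimation_user}. (With $C=I$, $R=0$, note $\expe{x_k|\vct y_{0:k}}=x_k$, so this is essentially immediate.)

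\textbf{Part (ii), eavesdropper divergence.} This is where the work lies. Fix the critical time $k_0$ with $\gamma_{u,k_0}=1$, $\gamma_{k_0}=0$, and condition on this event. The key structural fact is: for all $k\ge k_0$ the reference time satisfies $t_k\ge k_0$ (because the user's most recent success before $k$ is no earlier than $k_0$), and moreover by unrolling \eqref{EQN_Coding} along the chain of reference times one can express every intercepted packet $h_t$ with $t>k_0$, together with the (public) acknowledgment history, as a function of $\{x_{k_0}\}\cup\{w_s : s>k_0\}\cup\{$packets the eavesdropper holds from before $k_0\}$ in which the dependence on $x_{k_0}$ enters only through a telescoped combination that is never ``closed'' — intuitively, the eavesdropper never again sees $x_{k_0}$ in decodable form. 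The plan is to make this precise by bounding $P_k=\var{x_k|\info k}$ from below using the data-processing/smoothing inequality: conditioning on more information only decreases the covariance, so I would add to the eavesdropper's information everything \emph{except} the value of $x_{k_0}$ — i.e. reveal all noises $w_s$, $s> k_0$, all of $x_0,\dots,x_{k_0-1}$, and all channel outcomes — and show that even then the conditional covariance of $x_k$ is at least $A^{k-k_0}\,\var{x_{k_0}\mid \text{that enlarged info}}\,(A')^{k-k_0}$, because $x_k=A^{k-k_0}x_{k_0}+\sum_{s=k_0+1}^{k}A^{k-s}w_s$ and the second term is now known. The remaining task is to lower-bound $\var{x_{k_0}\mid \mathcal G}$ where $\mathcal G$ is the enlarged $\sigma$-algebra: I claim $\mathcal G$ is independent of the component of $x_{k_0}$ carrying the fresh randomness, so this covariance is bounded below by a fixed positive-definite matrix (something like $Q$, the one-step process-noise covariance injected at time $k_0$, which is present because $\gamma_{k_0}=0$ means the eavesdropper never observed $x_{k_0}$ itself, only functions that were ``reset'' at $k_0$). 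Then $\Tr P_k\ge \Tr\!\big(A^{k-k_0}Q(A')^{k-k_0}\big)\ge c\,\rho(A)^{2(k-k_0)}$ for a suitable $c>0$ by standard spectral-radius lower bounds (e.g. taking $Q\succeq qI$ and using $\norm{A^j}\ge \rho(A)^j$), with $c$ depending only on $q$, $A$ and the dimension, hence independent of $k_0$. This gives \eqref{EQN_big_o_result}; and since the hypothesis \eqref{EQN_theorem_condition} makes the critical event occur almost surely at some finite $k_0$, \eqref{EQN_big_o_result} forces $\Tr P_k\xrightarrow{a.s.}\infty$, establishing \eqref{EQN_Definition_Perfect_Secrecy_1} and completing (i).

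\textbf{Main obstacle.} The delicate step is isolating exactly which part of $x_{k_0}$ remains hidden from the eavesdropper for all $k\ge k_0$ and proving rigorously that the enlarged information $\mathcal G$ is independent of it. One must carefully track the recursive definition of reference times \eqref{EQN_reference_time}: a later reference state $x_{t_k}$ is itself only known to the eavesdropper up to the same unresolved offset, so the telescoping of $z_k=x_k-A^{k-t_k}x_{t_k}$ down the chain always terminates at $x_{k_0}$ (or earlier, at an even older unintercepted reference), and one needs that this chain genuinely passes through $k_0$ and is not ``repaired'' by some later packet the eavesdropper does intercept. Handling the bookkeeping of which packets before $k_0$ the eavesdropper holds, and confirming none of them reveals $x_{k_0}$, is the part that requires care; the rest is routine linear-algebra and measure-theoretic manipulation.
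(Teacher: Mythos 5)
Your proposal is correct in its essentials but reaches part (ii) by a genuinely different route. The paper proceeds via a worst-case coupling: Lemma \ref{THM_worst_case} computes, through the recursive Schur-complement formula \eqref{EQN_Conditional_Recursive_formula}, that $P_k=A^{k-k_0}P_{k_0}(A')^{k-k_0}$ when the eavesdropper intercepts \emph{every} packet after $k_0$, and Lemma \ref{THM_comparison_lemma} shows by a coupling argument that this scenario minorizes $P_k$ in general; the bound $P_{k_0}\succeq Q$ (or $=\Sigma_0$) then yields \eqref{EQN_big_o_result}. You instead run a genie/side-information argument: reveal to the eavesdropper everything except $w_{k_0}$, observe that every packet she ever holds is a function of $(x_0,w_1,\dots,w_{k_0-1})$ (if intercepted before $k_0$) or of $(w_{k_0+1},\dots,w_t)$ (if intercepted after $k_0$, since all reference times then satisfy $t_k\ge k_0$), conclude $\var{w_{k_0}\mid\mathcal G}=Q$, and propagate to get $P_k\succeq A^{k-k_0}Q(A')^{k-k_0}$ directly. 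This is cleaner and makes the mechanism transparent (it is exactly the intuition of Remark \ref{REM_intuition}); the paper's route, while heavier, produces the recursive estimation formula that it reuses for simulations and for the output-measurement case. One step you should not wave away: the inequality ``conditioning on more information decreases the conditional covariance'' is \emph{not} a pointwise fact for the non-Gaussian information set $\info{k}$ (only an averaged one, via the law of total variance). It becomes pointwise here because $\mathcal G$ contains all channel outcomes, so on each event $\{\g{k}=s\}$ both conditionings reduce to nested \emph{Gaussian} linear observation sets with deterministic conditional covariances; making that reduction rigorous is precisely the content of the paper's Lemmas \ref{LEM_sigma_algebra}--\ref{THM_Monotonicity}, and your proof needs the same machinery. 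Finally, for $k_0=0$ the hidden randomness is $x_0$ rather than $w_{k_0}$, so the constant should be $c=\min\set{\lambda_{min}(Q),\lambda_{min}(\Sigma_0)}$ as in the paper.
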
  

The above theorem is remarkable as the condition \eqref{EQN_theorem_condition} for perfect secrecy is completely minimal; it only requires the critical event, where the user receives a message without the eavesdropper intercepting it, 
to occur at least once. 
Any joint distribution of packet receptions and interceptions that satisfies this condition is covered, and in this sense the result is channel-free, and holds in cases of practical interest -- see Remark~\ref{REM_channel}.

The proof of Theorem \ref{THM_perfect_secrecy} is included in the Appendix and is a consequence of the following lemma, which can be thought as the worst case, in terms of secrecy, of Theorem \ref{THM_perfect_secrecy}. That is when the critical event $\set{\gamma_{u,k_0}=1,\,\gamma_{k_0}=0}$ occurs at time $k_0$ and the eavesdropper receives all the following packets for $k>k_0$. 
 \begin{lemma}[Worst case analysis]
 \label{THM_worst_case}
Consider system \eqref{EQN_system} with channel model \eqref{EQN_channel_model} and coding scheme \eqref{EQN_Coding}. Suppose that both  events 
\begin{align}\event&=\set{\gamma_{u,k_0}=1,\,\gamma_{k_0}=0}\text{ and}\label{EQN_event_B}\\
\EV&=\set{\gamma_{k}=1,\,\text{for all }k\ge k_0+1}\label{EQN_event_C}
\end{align}
occur for some $k_0\ge 0$.  Then 
	\begin{equation}\label{EQN_Lemma_goal_equation}
		P_{k}=A^{k-k_0}P_{k_{0}}\paren{A'}^{k-k_0}	
		\end{equation}
	for $k\ge k_0$ in $\event\cap \EV$.\hfill $\diamond$
 \end{lemma}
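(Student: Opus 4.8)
The plan is to analyze what the eavesdropper knows under the conditioning event $\event \cap \EV$ and show that her information is equivalent to knowing $x_{k_0}$ plus all subsequent states' encoded versions, but with the crucial gap that $x_{k_0}$ itself was never decoded. First I would fix notation: on $\event \cap \EV$, the critical event happens at $k_0$ (so $\gamma_{u,k_0}=1$, meaning $t_k = k_0$ for all $k$ in the window $k_0 < k \le k_1$ where $k_1$ is the next user reception, but actually since we only need $k \ge k_0$ and the reference state chain, I should track the reference times carefully), and the eavesdropper intercepts every packet from $k_0+1$ onward. The key observation is that for $k > k_0$, the eavesdropper's newly intercepted packets $z_k = x_k - A^{k-t_k}x_{t_k}$ are, via the acknowledgment history (which the eavesdropper has), \emph{deterministic functions of} $\set{x_j - A^{j - t_j} x_{t_j}}$ reducible down to a function of $x_k$ and $x_{k_0}$; more precisely, the information added after time $k_0$ lets the eavesdropper compute $x_k - A^{k-k_0} x_{k_0}$ for each $k \ge k_0$ (telescoping the reference-state chain, all of which bottoms out at $x_{k_0}$ because $k_0$ was the reception at which the eavesdropper first failed).

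The main steps: (1) Show that the eavesdropper's information $\info k$ for $k \ge k_0$ is, up to an invertible transformation, equal to $\set{\info{k_0-1}, \, z_{k_0}, \, x_{k_0+1} - A x_{k_0}, \, \ldots, \, x_k - A^{k-k_0}x_{k_0}}$ — i.e., $\info{k_0}$ together with the noise increments $w_{k_0+1}, \ldots, w_k$ (since $x_{j} - A^{j-k_0}x_{k_0} = \sum_{i=k_0+1}^{j} A^{j-i} w_i$). Here $z_{k_0} = x_{k_0} - A^{k_0 - t_{k_0}} x_{t_{k_0}}$ was \emph{not} intercepted (that is exactly the content of $\gamma_{k_0}=0$ in $\event$), so I must be careful: the eavesdropper does not have $z_{k_0}$; rather she has $\info{k_0-1}$ and the post-$k_0$ increments. (2) Use the fact that $w_{k_0+1}, \ldots, w_k$ are independent of $\info{k_0-1}$ and of $x_{k_0}$ (process noise is white and independent of the past), together with the linear-Gaussian structure, to compute $\var{x_k \mid \info k}$. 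Since $x_k = A^{k-k_0}x_{k_0} + \sum_{i=k_0+1}^k A^{k-i}w_i$, and the eavesdropper learns exactly the second sum (it is a deterministic function of her post-$k_0$ data), while $A^{k-k_0}x_{k_0}$ is the only residual uncertainty, we get $\var{x_k \mid \info k} = A^{k-k_0}\var{x_{k_0}\mid \info k}(A')^{k-k_0}$. (3) Finally argue $\var{x_{k_0} \mid \info k} = \var{x_{k_0}\mid\info{k_0}} = P_{k_0}$: conditioning on the additional post-$k_0$ data (which is a function of $x_{k_0+1}-Ax_{k_0},\ldots$, i.e. of independent noises) does not change the conditional law of $x_{k_0}$ given $\info{k_0}$, because those increments are independent of $(x_{k_0}, \info{k_0})$. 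This yields \eqref{EQN_Lemma_goal_equation}.

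The hard part will be step (1): carefully verifying that, given the acknowledgment history $\vct{\gamma}_{u,0:k}$ (which the eavesdropper possesses), the intercepted packets $\set{z_j : k_0 < j \le k}$ can be recombined to recover exactly $x_j - A^{j-k_0}x_{k_0}$ for each such $j$, and that no \emph{more} information than this (plus $\info{k_0-1}$) is available — in particular that $z_{k_0}$ and $x_{k_0}$ remain unavailable. This requires tracking the reference-time recursion: after $k_0$ the reference state the sensor uses is $x_{k_0}$ (until the next user success), then it becomes whatever the next received state is, and so on; but each such received state is itself expressible relative to $x_{k_0}$ via a telescoping sum of already-intercepted packets, so by induction on $j$ the eavesdropper can always reduce to the base point $x_{k_0}$. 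I would formalize this as: for $k_0 < j \le k$, $x_j - A^{j-k_0}x_{k_0}$ is a $\info k$-measurable function, by induction using $z_j = x_j - A^{j-t_j}x_{t_j}$ and the inductive hypothesis applied at $t_j$ (noting $t_j \ge k_0$). A secondary subtlety is handling the "until the first successful transmission" convention and making sure the $k_0$ here is genuinely a time of user reception so the chain is well-defined; and one should note the eavesdropper gains nothing from packets at times $\le k_0$ beyond $\info{k_0-1}$ since those are already in $\info{k_0-1}$, and $z_{k_0}$ is dropped.
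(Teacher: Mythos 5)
Your argument is correct and reaches the same conclusion, but it is packaged differently from the paper's proof. The paper proceeds by induction on $k$: using its recursive estimation formula (Proposition \ref{THM_recursive}), it shows that on $\event\cap\EV$ the innovation of the newly intercepted packet satisfies $\y{k}-\expe{\y{k}|\info{k-1}}=w_k$ (via the same two-case analysis of whether the reference time advances, i.e. $z_k=w_k$ or $z_k=w_k+Az_{k-1}$), and then a Schur-complement computation gives the one-step update $P_k=AP_{k-1}A'+Q-QQ^{-1}Q=AP_{k-1}A'$. You instead make a single global argument: the telescoping of the reference-state chain shows that the post-$k_0$ intercepts are an invertible (given $\vct{\gamma}_{u,0:k}$) linear function of $x_j-A^{j-k_0}x_{k_0}=\sum_{i=k_0+1}^{j}A^{j-i}w_i$, $k_0<j\le k$, hence of the noises $w_{k_0+1},\dots,w_k$ alone; since these are independent of $\paren{x_{k_0},\info{k_0}}$, conditioning on them leaves $\var{x_{k_0}|\info{k}}=P_{k_0}$, and writing $x_k=A^{k-k_0}x_{k_0}+\text{(known)}$ gives \eqref{EQN_Lemma_goal_equation} directly. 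Your route is arguably cleaner (it isolates the information-theoretic content in one place, avoids the Schur complement, and does not even need $Q$ invertible), while the paper's recursive form plugs directly into its general estimation formula and reuses the same machinery elsewhere. The one thing you should not gloss over: $\info{k}$ is a mixed discrete/continuous information set (channel outcomes plus Gaussian packets), so statements like ``$\var{x_{k_0}|\info{k}}=\var{x_{k_0}|\info{k_0}}$ on $\event\cap\EV$'' need the conditioning-on-the-outcome-pattern device the paper develops (Lemmas \ref{LEM_sigma_algebra}, \ref{THM_conditional_expectation}, i.e. replacing $\info{k}$ by the Gaussian set $\ifo{s}$ on $\set{\g{k}=s}$); with that in hand your independence argument goes through verbatim.
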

Notice that the equation $P_{k}=A^{k-k_0}P_{k_{0}}\paren{A'}^{k-k_0}$ is unstable with rate $\rho\paren{A}^2$. Hence, even in the most pessimistic scenario for confidentiality, the eavesdropper still has unbounded error. In the general case when the eavesdropper does not intercept all packets after $k_0$, the eavesdropper's error will be even larger (cf. Lemma \ref{THM_comparison_lemma} in Appendix) and verifies the result of Theorem \ref{THM_perfect_secrecy}.

\begin{remark}\label{REM_intuition}
	The choice of $A^{k-t_k}$ is pivotal for achieving secrecy. The difference $x_k-A^{k-t_k}x_{t_k}$ is actually a linear combination of the process noise signals from time $t_k+1$ up to $k$
	\begin{equation*}
	x_k-A^{k-t_k}x_{t_k}=\sum_{j=t_k+1}^{k}A^{k-j}w_j
	\end{equation*}
	as follows from the system dynamics \eqref{EQN_system}. If the critical event occurs at some time $k_0$, then the eavesdropper permanently loses all information about the process noise $w_{k_0}$ at time $k_0$. This loss of information is amplified by the unstable system dynamics over time leading to the eavesdropper's unbounded error. \hfill $\diamond$
\end{remark} 
\begin{remark}\label{REM_channel}
Suppose that the channel outcomes are independent over time, and suppose that there is a positive probability that the critical event occurs at any time $k$, i.e., $P\paren{\gamma_{u,k}=1,\,\gamma_{k}=0}>\delta>0$. For example in a wireless communication setting this may be due to attenuation of the transmitted signal at the eavesdropper or due to environmental interference. Then condition \eqref{EQN_theorem_condition} for perfect secrecy by Theorem \ref{THM_perfect_secrecy} follows from the Borel-Cantelli lemma \cite{durrett2010probability}. 
\hfill $\diamond$
\end{remark}

\subsection{Rate of increase of eavesdropper's error covariance}
Another property of the proposed coding scheme is that the rate of increase of the eavesdropper's error covariance is asymptotically optimal.  Once the critical event occurs, the eavesdropper's error behaves asymptotically as the open loop prediction error, i.e. when all measurements are lost, which is the largest possible error for the eavesdropper. 

Formally, let us denote the open-loop prediction estimate and error covariance matrix by:
\begin{equation}\label{EQN_Open_Loop_Estimation}
x^{op}_k=\expe{ x_{k}},\quad P^{op}_k=\var{x_k},
\end{equation}
which implies: 
\begin{equation}\label{EQN_Open_loop_covariance}
P^{op}_k=AP^{op}_{k-1}A'+Q,
\end{equation}
with $P^{op}_0=\Sigma_0$. 
The expected eavesdropper's error is always smaller than the open-loop prediction error  since it holds that:
\begin{equation}\label{EQN_Open_Loop_Optimal}
P^{op}_{k}=\var{x_k}\succeq \expe{\cov{x_k|\info{k}}}=\expe{P_k}
\end{equation}
for any information set $\info{k}$, where the inequality follows by \cite[p. 230]{durrett2010probability}. Thus, the open-loop prediction error is the maximum possible in expectation.

The following result, which is a corollary of Theorem \ref{THM_perfect_secrecy}, shows that the open-loop prediction error covariance is upper bounded by a multiple of the eavesdropper's error, where the multiplicative constant depends on the time $k_0$ of the first critical event. This implies that the rate of increase of our coding scheme is asymptotically optimal with respect to the eavesdropper's error once the critical event occurs.
\begin{corollary}[Rate of increase]\label{THM_rate_of_increase}
Consider system \eqref{EQN_system}, with channel model \eqref{EQN_channel_model} and coding scheme \eqref{EQN_Coding}. Let $k_0$ be the first time the critical event \eqref{EQN_critical_event} occurs. Let also $P^{op}_k$, $P_k$ be the open-loop prediction error covariance \eqref{EQN_Open_loop_covariance} and estimation error covariance \eqref{EQN_estimate_variance_eavesdropper} matrices respectively. Then,	
\begin{equation}\label{EQN_rate_of_increase}
\Tr P^{op}_k \le c \rho(A)^{-2k_0} \paren{\Tr P_k+1},
\end{equation}
where $c>0$ is some constant independent of $k_0$ \hfill $\diamond$
\end{corollary}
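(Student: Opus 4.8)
The statement is billed as a corollary of Theorem~\ref{THM_perfect_secrecy}, so the natural route is to combine its lower bound on $\Tr P_k$ with an upper bound on $\Tr P^{op}_k$. The plan is to do this carefully by observing that, from the first critical time $k_0$ onward, both $P^{op}_k$ and $P_k$ are propagated by the \emph{same} unstable Lyapunov map $X\mapsto AXA'$, so their exponential growth rates agree and the ratio $\Tr P^{op}_k/\Tr P_k$ is controlled by a constant that captures only ``how things stand at time $k_0$''. By Theorem~\ref{THM_perfect_secrecy}(i) the critical event occurs almost surely, so $k_0<\infty$ a.s.; fix a realization and take $k\ge k_0$. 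Iterating \eqref{EQN_Open_loop_covariance},
\[
P^{op}_k=A^{k-k_0}P^{op}_{k_0}(A')^{k-k_0}+\sum_{i=0}^{k-k_0-1}A^{i}Q(A')^{i},
\]
while Lemma~\ref{THM_worst_case}, extended to a general interception pattern by the comparison Lemma~\ref{THM_comparison_lemma} (the worst case being a lower bound for the general one), gives $P_k\succeq A^{k-k_0}P_{k_0}(A')^{k-k_0}$.

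Next I would pin down two uniform estimates. First, $P_{k_0}\succeq Q\succ 0$: on the critical event the eavesdropper never receives any packet carrying the process noise $w_{k_0}$. Indeed, by Remark~\ref{REM_intuition} each transmitted $z_j=\sum_{i=t_j+1}^{j}A^{j-i}w_i$ involves $w_{k_0}$ only when $t_j<k_0\le j$, which for $j>k_0$ is impossible because $\gamma_{u,k_0}=1$ forces $t_j\ge k_0$, and which for $j=k_0$ corresponds exactly to the dropped packet. Writing $x_{k_0}$ as an affine function of $(x_0,w_1,\dots,w_{k_0-1})$ plus $w_{k_0}$, with $w_{k_0}$ independent of $\info{k_0}$ and of that affine part, yields $P_{k_0}=\var{x_{k_0}|\info{k_0}}\succeq\var{w_{k_0}}=Q$, hence $\lambda_{\min}(P_{k_0})\ge\lambda_{\min}(Q)>0$; consequently $\Tr P_k\ge\lambda_{\min}(Q)\norm{A^{k-k_0}}_F^2$ (this also re-derives Theorem~\ref{THM_perfect_secrecy}(ii)). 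Second, since $\rho(A)>1$ the partial sums $\sum_{i=0}^{m-1}\norm{A^{i}}_F^2$ are geometrically dominated by their last term, $\le\tilde c\,\norm{A^{m}}_F^2$ with $\tilde c$ independent of $m$. Bounding the first term of $P^{op}_k$ through the matrix inequality $P^{op}_{k_0}\preceq\lambda_{\max}(P^{op}_{k_0})\lambda_{\min}(Q)^{-1}P_{k_0}$ and the noise sum through its trace then gives
\[
\Tr P^{op}_k\le\frac{\lambda_{\max}(P^{op}_{k_0})+\tilde c\,\lambda_{\max}(Q)}{\lambda_{\min}(Q)}\;\Tr P_k,
\]
and since $\lambda_{\max}(P^{op}_{k_0})\le\Tr P^{op}_{k_0}=\O{\rho(A)^{2k_0}}$ by \eqref{EQN_Open_loop_covariance}, the multiplicative factor is of order $\rho(A)^{2k_0}$, which is \eqref{EQN_rate_of_increase}. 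The additive ``$+1$'' and the range $k<k_0$ --- where $P_k$ may be singular if the eavesdropper has so far decoded everything --- are absorbed by the crude pathwise bound $P_k\preceq P^{op}_k$ (conditioning cannot increase covariance) together with $\Tr P^{op}_k=\O{\rho(A)^{2k}}\le\O{\rho(A)^{2k_0}}$ for $k\le k_0$.

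The conceptual steps --- iterating the Lyapunov recursion, invoking the comparison lemma, and establishing $P_{k_0}\succeq Q$ --- are routine; the step I expect to be the real obstacle is the growth-rate bookkeeping. It is essential to carry out the comparison at the matrix level, through the common conjugating factor $A^{k-k_0}(\cdot)(A')^{k-k_0}$, rather than dividing Theorem~\ref{THM_perfect_secrecy}(ii)'s scalar bound $\Tr P_k\ge c\,\rho(A)^{2(k-k_0)}$ into a scalar bound on $\Tr P^{op}_k$: the lower bound is clean because $\rho(A)^{m}\le\norm{A^{m}}$ for any submultiplicative norm, but any upper bound $\norm{A^{m}}\le c\,\rho(A)^{m}$ carries, when $A$ has a nontrivial Jordan block at a maximum-modulus eigenvalue, a polynomial-in-$m$ correction that would mismatch the two bases and blow up in $k$ if not cancelled. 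Keeping the same factor $\norm{A^{k-k_0}}_F^2$ on both sides of the ratio makes this polynomial drop out; the residual polynomial-in-$k_0$ factor hidden in $\Tr P^{op}_{k_0}$ is either absorbed by enlarging the base to $\rho(A)+\epsilon$ for arbitrary $\epsilon>0$ (or by using an inner product adapted to $A$), or is simply absent when $A$ is diagonalizable, in particular normal, in which case \eqref{EQN_rate_of_increase} holds with the clean base $\rho(A)$. I would finish by checking that every constant depends only on $A,Q,\Sigma_0$, so a single $c$ serves for almost every sample path.
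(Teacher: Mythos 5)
Your proposal follows the same skeleton as the paper's proof: upper-bound $\Tr P^{op}_k$ by iterating \eqref{EQN_Open_loop_covariance}, lower-bound $\Tr P_k$ via Lemma~\ref{THM_worst_case} combined with the comparison Lemma~\ref{THM_comparison_lemma} and $P_{k_0}\succeq Q$ (or $=\Sigma_0$), and take the ratio. Where you differ is in the growth-rate bookkeeping, and there your version is actually more careful than the paper's. The paper works with scalars and invokes Gelfand's formula to assert $\norm{A^k}_2\le c\,\rho(A)^k$, hence $\Tr P^{op}_k\le c_2\sum_i\rho(A)^{2i}$; this is not a valid uniform bound when a maximum-modulus eigenvalue of $A$ has a nontrivial Jordan block, since then $\norm{A^k}_2/\rho(A)^k\to\infty$ polynomially. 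Your device of keeping the common factor $\norm{A^{k-k_0}}_F^2$ on both sides of the ratio repairs exactly that step, and your observation that the residual constant $\Tr P^{op}_{k_0}$ still hides a polynomial in $k_0$ (removable only when the max-modulus eigenvalues are semisimple, or by inflating the base to $\rho(A)+\epsilon$) applies verbatim to the paper's own argument, which silently ignores it. Your independence argument for $P_{k_0}\succeq Q$ is sound and is essentially Remark~\ref{REM_intuition}; the paper obtains the same fact from Proposition~\ref{THM_recursive}. Your claim $\sum_{i=0}^{m-1}\norm{A^i}_F^2\le\tilde c\,\norm{A^m}_F^2$ is true but not merely "geometric domination": it needs the matching of the polynomial degrees in the upper bound on $\norm{A^i}_F$ and the lower bound on $\norm{A^m}_F$, which is the same Jordan-form analysis you defer to the end.

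One concrete point you should not have let pass: the multiplicative factor your derivation (and the paper's own algebra) produces is of order $\rho(A)^{+2k_0}$, i.e.\ $\Tr P^{op}_k\le c\,\rho(A)^{2k_0}\paren{\Tr P_k+1}$, yet you declare this to ``be'' \eqref{EQN_rate_of_increase}, which is printed with $\rho(A)^{-2k_0}$. Read literally, \eqref{EQN_rate_of_increase} is false for $k_0\ge 1$: take $\gamma_{u,j}=1$ for all $j$ and $\gamma_j=1$ for all $j\ne k_0$; then the eavesdropper decodes everything before $k_0$, so $P_{k_0}=Q$ and Lemma~\ref{THM_worst_case} gives $P_k=A^{k-k_0}Q\paren{A'}^{k-k_0}$, whence $\Tr P^{op}_k/\Tr P_k$ is of order $\rho(A)^{2k_0}$ for large $k$, which no $k_0$-independent $c$ times $\rho(A)^{-2k_0}$ can absorb. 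The exponent sign in \eqref{EQN_rate_of_increase} is evidently a typo (the surrounding discussion, equivalent to $\Tr P_k\ge c^{-1}\rho(A)^{-2k_0}\Tr P^{op}_k-1$, is consistent only with the $+2k_0$ form), but since what you prove is the $+2k_0$ inequality you should say so explicitly rather than identify it with the displayed one.
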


The determining factor in the above inequality is the time $k_0$ of the first critical event. If the critical event occurs more than once, although the eavesdropper's error gets larger, its rate of growth does not differ from the case when it occurs just once. If the probability of the critical event occurring is one, then also $P\paren{\rho(A)^{-2k_0}>0}=1$. Hence, with even a single occurrence, the error starts increasing as in the open-loop asymptotically, which is a very strong guarantee.

One caveat is that the first time $k_0$ the critical event occurs is in general random and not in our control. 
If the eavesdropper's interception rate is very high, the event may take some time to occur. Then, the term $\rho(A)^{-2k_0}$ is smaller, which means that secrecy is compromised at the first time steps. 
\begin{remark}
A possible remedy to accelerate the critical event, is to force it by using a complementary and perhaps more expensive encoding, e.g., encryption. In this case, it is sufficient to securely and reliably transmit just the first packet at time $k=0$ using the complementary scheme. Then, letting our cheap coding scheme take over achieves perfect secrecy. In that sense, State-Secrecy Codes can be effectively used alongside other coding schemes. Theorem \ref{THM_perfect_secrecy} allows us to concentrate our more expensive defense efforts in a single transmission and still achieve perfect secrecy.
\end{remark}

\section{Perfect secrecy with output measurements}\label{Section_Output}
In this section, we adapt the coding scheme \eqref{EQN_Coding} to achieve perfect secrecy in the case of output measurements (general $C$, $R\succ 0$). This is achieved at the expense of some additional computational cost at the sensor.
\begin{figure}[t] \centering{
		\includegraphics[scale=0.7]{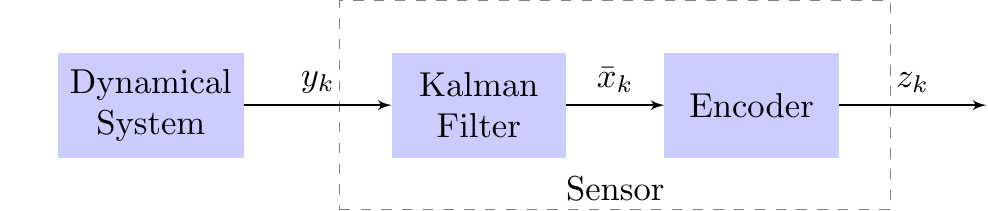}  
		\caption {In the case of output measurements, the sensor locally implements a Kalman Filter and computes the estimates $\bar{x}_k$. Then, it forms the encoded signal $z_k$ based on these local estimates $\bar{x}_k$.}
		\label{Figure_Output_Coding} }
\end{figure}
Before transmitting and before applying any coding, we propose that the sensor implements a local Kalman filter scheme (see also \cite{hespanha2007survey})   as shown in Figure \ref{Figure_Output_Coding} and computes the local estimates:
\begin{equation}\label{EQN_local_Kalman_estimate}
\bar{x}_k=\expe{x_k|\vct{y}_{0:k}},\,\bar{x}_{-1}=\expe{x_0}=0
\end{equation}
with prediction error covariance matrix:
\begin{equation}\label{EQN_local_Kalman_covariance}
\bar{P}_{k+1|k}=\var{x_{k+1}|\vct{y}_{0:k}},\,\bar{P}_{0|-1}=\var{x_{0}}=\Sigma_0
\end{equation}
where $\vct{y}_{0:k}=\paren{y_0,\dots,y_k}$. Next, the sensor applies a State-Secrecy Code to the local state estimates: 
\begin{equation}\label{EQN_ouput_coding}
\y{k}=\bar{x}_k-A^{k-t_k}\bar{x}_{t_k},
\end{equation} 
similar to the code in \eqref{EQN_Coding} with $t_k$ being the reference time defined in \eqref{EQN_reference_time}. 

Now, recall that in the classic Kalman filter derivation \cite{anderson2005optimal}, we have the following recursive equation:
\begin{equation}\label{EQN_system_local_Kalman}
\bar{x}_k=A\bar{x}_{k-1}+K_k\paren{y_k-CA\bar{x}_{k-1}}=A\bar{x}_{k-1}+K_k\bar{w}_k,
\end{equation}
where we use $\bar{w}_k$ to denote the innovation sequence $y_k-CA\bar{x}_{k-1}$ and \[
K_k=\bar{P}_{k|k-1} C'\paren{C\bar{P}_{k|k-1}C'+R}^{-1}\]
is the Kalman gain. 
It follows that the innovation sequence $y_k-CA\bar{x}_{k-1}$ is Gaussian white noise \cite{anderson2005optimal} with covariance $C\bar{P}_{k|k-1}C'+R$. In this sense, equation \eqref{EQN_system_local_Kalman} is similar to the state equation \eqref{EQN_system}. 

Since the pair $\paren{A,C}$ is detectable and $Q\succ 0$ (so that $\paren{A,Q^{1/2}}$ is controllable), the prediction matrix $\bar{P}_{k|k-1}$ of the Kalman filter converges to a limit $\bar{P}$, which is the positive semidefinite solution of the discrete algebraic Riccati equation:
\begin{equation}\label{EQN_DARE}
\bar{P}=A\bar{P}A'+Q-A\bar{P}C'\paren{C\bar{P}C'+R}^{-1} C\bar{P}A'
\end{equation}
while the Kalman gain converges to \begin{equation}\label{EQN_Kalman_gain}
K=\bar{P} C'\paren{C\bar{P}C'+R}^{-1}.
\end{equation}
We assume that at time $k=0$ the sensor has an initial local state estimate with covariance equal to the steady state error covariance $\Sigma_{0}=\bar{P}$. Since the Kalman filter converges fast, it is reasonable to assume that it has already converged at the beginning of the system operation.

The main result of this section is presented in the following theorem. It states that coding scheme \eqref{EQN_ouput_coding} achieves the same perfect secrecy guarantees as in the case of direct state information case in Section \ref{Section_Perfect}. The user's estimation error is optimal, while just a single occurrence of the critical event renders the eavesdropper's error unbounded with exponential rate of increase. 
\begin{theorem}\label{THM_output}
	Consider system \eqref{EQN_system} with output model \eqref{EQN_output}, channel model \eqref{EQN_channel_model}, coding scheme \eqref{EQN_ouput_coding} and $\Sigma_0=\bar{P}$. If
	\begin{equation}\label{EQN_output_hypothesis}
	\mathbb{P}( \gamma_{u,k}=1,\,\gamma_{k}=0,\text{ for some }k\ge 0 )=1, 
	\end{equation}
	then:
	\renewcommand{\labelenumi}{(\roman{enumi})}
	\begin{enumerate}
		\item perfect secrecy is achieved according to Definition \ref{DEF_perfect_secrecy}.
		\item conditioned on the event $\set{\gamma_{u,k_0}=1,\,\gamma_{k_0}=0}$ for some $k_0\ge 0$, the eavesdropper's estimation error grows unbounded for $k \geq k_0$ as 
		\begin{equation}\label{EQN_output_main_result_asymptotics}
		\Tr{P_k}\ge c \rho\paren{A}^{2\paren{k-k_0}}-c',
		\end{equation}
	\end{enumerate}
	where $P_k$ is the error covariance defined in \eqref{EQN_estimate_variance_eavesdropper}  and $c,c'>0$ are some constants independent of $k_0$.
	\hfill $\diamond$
\end{theorem}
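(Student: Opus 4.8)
The plan is to reduce the output‑measurement problem to the state‑measurement analysis of Section~\ref{Section_Perfect}, with the local Kalman estimate $\bar{x}_k$ in place of the state. Because $\Sigma_0=\bar P$, the local filter is in steady state for every $k\ge 0$, so $\bar P_{k|k-1}=\bar P$, the gain is the constant $K$ of \eqref{EQN_Kalman_gain}, and \eqref{EQN_system_local_Kalman} becomes $\bar{x}_k=A\bar{x}_{k-1}+K\bar{w}_k$ with $\bar{w}_k$ i.i.d.\ Gaussian of covariance $\bar R:=C\bar P C'+R$ and independent of the channel. I will also use the decomposition $x_k=\bar{x}_k+e_k$ with $e_k:=x_k-\bar{x}_k$; by the orthogonality principle and the independence of the channel from the system, $e_k$ is uncorrelated with — hence, being jointly Gaussian, independent of — every quantity measurable with respect to $\paren{\vct{y}_{0:k},\text{channel outcomes}}$, and $\var{e_k}=\bar P-K\bar R K'$ (the steady filtered error covariance, which is positive semidefinite, whence $\bar P=A\paren{\bar P-K\bar R K'}A'+Q\succeq Q\succ 0$). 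Thus \eqref{EQN_ouput_coding} is exactly the code \eqref{EQN_Coding} applied to the process $\bar x_k$, and the only structural difference from Section~\ref{Section_Perfect} is that the ``process noise'' $K\bar w_k$ of $\bar x_k$ has covariance $K\bar R K'=\bar P C'\bar R^{-1}C\bar P$, which — in contrast to $Q\succ 0$ — may be singular; this is the one place requiring extra work.

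For part~(i), I would first show by induction on the successful reception times that whenever $\gamma_{u,k}=1$ the user can reconstruct $\bar{x}_k$ exactly: at the first reception the reference is $\bar x_{-1}=0$, and at any later reception $k$ the reference $\bar x_{t_k}$ was reconstructed at an earlier reception, so $\bar x_k=z_k+A^{k-t_k}\bar x_{t_k}$. Hence $\bar x_k$ is $\mathcal{I}^u_k$‑measurable at reception times, and since $\expe{e_k\mid\mathcal{I}^u_k}=0$ and $e_k$ is independent of $\mathcal{I}^u_k$, we get $\hat x_{u,k}=\expe{x_k\mid\mathcal{I}^u_k}=\bar x_k=\expe{x_k\mid\vct{y}_{0:k}}$ and $P_{u,k}=\var{e_k\mid\mathcal{I}^u_k}=\bar P-K\bar R K'=\var{x_k\mid\vct{y}_{0:k}}$, which is \eqref{EQN_optimal_estimation_user}.

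For part~(ii) I would establish a worst‑case lemma mirroring Lemma~\ref{THM_worst_case}. Fix $\event,\EV$ as in \eqref{EQN_event_B}--\eqref{EQN_event_C} and work on $\event\cap\EV$. Since $\gamma_{u,k_0}=1$, every reference time after $k_0$ is $\ge k_0$, so each intercepted packet $z_j=\bar x_j-A^{j-t_j}\bar x_{t_j}=\sum_{i=t_j+1}^{j}A^{j-i}K\bar w_i$ for $k_0<j\le k$ is a lower‑triangular (unit‑diagonal) linear combination of $\set{K\bar w_i:k_0<i\le k}$; the eavesdropper therefore recovers all of these. Moreover $K\bar w_{k_0}$ enters no packet other than $z_{k_0}$ (in packets referencing $\bar x_{k_0}$ the $\bar x_{k_0}$‑term cancels), so $K\bar w_{k_0}$ is independent of $\info{k}$; writing $\bar x_k=A^{k-k_0}\bar x_{k_0}+\sum_{i=k_0+1}^{k}A^{k-i}K\bar w_i$ gives $\var{\bar x_k\mid\info k}\succeq A^{k-k_0}K\bar R K'\paren{A'}^{k-k_0}$, and adding the independent $e_k$ yields
\begin{equation*}
P_k \succeq A^{k-k_0}K\bar R K'\paren{A'}^{k-k_0}+\bar P-K\bar R K' \quad\text{on }\event\cap\EV .
\end{equation*}
By the comparison/monotonicity result (Lemma~\ref{THM_comparison_lemma}) — fewer interceptions only enlarge $P_k$ — the same lower bound holds conditioned on $\event$ alone.

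It then remains to prove $\Tr\paren{A^{j}K\bar R K'\paren{A'}^{j}}\ge c\,\rho(A)^{2j}$ for some $c>0$, and this is the crux. Let $u$ be a (possibly complex) left eigenvector of $A$ with $u^{*}A=\lambda u^{*}$, $\abs{\lambda}=\rho(A)>1$. Multiplying $\bar P=A\paren{\bar P-K\bar R K'}A'+Q$ on the left by $u^{*}$ and on the right by $u$, and using $u^{*}A=\lambda u^{*}$ (hence $A'u=\bar\lambda u$), gives
\begin{equation*}
u^{*}\paren{K\bar R K'}u=\frac{u^{*}Qu+\paren{\rho(A)^{2}-1}\,u^{*}\bar P u}{\rho(A)^{2}}>0 ,
\end{equation*}
since $Q\succ 0$ and $\bar P\succeq Q\succ 0$; so the lost innovation has a strictly positive component along every dominant left eigenvector of $A$. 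As $u^{*}A^{j}K\bar R K'\paren{A'}^{j}u=\rho(A)^{2j}\,u^{*}\paren{K\bar R K'}u$ and $A^{j}K\bar R K'\paren{A'}^{j}\succeq 0$, its trace dominates its largest eigenvalue, giving $\Tr\paren{A^{j}K\bar R K'\paren{A'}^{j}}\ge \rho(A)^{2j}\,u^{*}\paren{K\bar R K'}u/\norm{u}^{2}=:c\,\rho(A)^{2j}$. Combined with the worst‑case bound this gives $\Tr P_k\ge c\,\rho(A)^{2(k-k_0)}$ on $\event$, which is \eqref{EQN_output_main_result_asymptotics} (in fact one may take $c'=0$); since by \eqref{EQN_output_hypothesis} the critical event occurs at some finite $k_0$ almost surely, $\Tr P_k\stackrel{a.s.}{\rightarrow}\infty$, establishing \eqref{EQN_Definition_Perfect_Secrecy_1}. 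The main obstacle is precisely this last step: in the state case $P_{k_0}\succeq Q\succ 0$ makes divergence immediate, whereas here the lost information carries the possibly rank‑deficient covariance $K\bar R K'$, so one must invoke the algebraic Riccati equation (crucially $Q\succ 0$) to certify that this covariance still excites the slowest‑decaying — i.e.\ the dominant unstable — mode of $A$. Secondary care is needed to verify that the eavesdropper's reconstruction of the post‑$k_0$ innovations goes through for arbitrary reference‑time patterns, and that the independence of $e_k$ survives conditioning on the random channel.
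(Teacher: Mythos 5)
Your proposal is correct and follows the same skeleton as the paper's proof: treat the local estimate $\bar{x}_k$ as a new state process driven by the white innovation $K\bar{w}_k$ with covariance $\bar{Q}=K(C\bar{P}C'+R)K'$, establish the worst-case/comparison lower bound $\estm_k\succeq A^{k-k_0}\bar{Q}(A')^{k-k_0}$ on the critical event, and use the Riccati equation $\bar{P}=A(\bar{P}-\bar{Q})A'+Q$ with $Q\succ 0$ to certify $u^{*}\bar{Q}u>0$ along the dominant left eigenvector — this last step is exactly the paper's Lemma~\ref{THM_output_lemma}, which it proves by contradiction where you solve for $u^{*}\bar{Q}u$ explicitly (and you are in fact more careful than the paper in allowing a complex eigenvector). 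The one genuinely different step is the transfer from $\estm_k$ to $P_k$: the paper uses MMSE suboptimality plus the inequality $\norm{a-b}^2\le 2\norm{a-c}^2+2\norm{b-c}^2$ to get $\Tr\estm_k\le 2\Tr P_k+2\Tr(\bar{P}-KC\bar{P})$, hence the constants $c/2$ and $c'=\Tr(\bar{P}-KC\bar{P})$ in \eqref{EQN_output_main_result_asymptotics}; you instead use the exact orthogonal decomposition $x_k=\bar{x}_k+e_k$ with the filtering error $e_k$ independent of $\sigma(\vct{y}_{0:k},\g{k})\supseteq\sigma(\info{k})$, which gives the identity $P_k=\estm_k+(\bar{P}-KC\bar{P})\succeq\estm_k$ and therefore the sharper bound with $c'=0$ (and no factor $1/2$). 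This is a legitimate improvement — the independence of $e_k$ from the channel-augmented information does hold because the channel outcomes are independent of the plant — at the cost of having to verify that the orthogonality survives conditioning on the random reception pattern, which is precisely what the paper's Lemma~\ref{THM_conditional_expectation} machinery (conditioning on each fixed outcome sequence $\set{\g{k}=s}$) supplies; your proof should invoke it explicitly at that point. Your direct "triangular innovations" derivation of the worst-case bound is an equivalent repackaging of the paper's induction on the Schur-complement recursion \eqref{EQN_Conditional_Recursive_formula} and is fine.
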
 

To prove Theorem \ref{THM_output},  we can use the techniques of the previous section to show an intermediate result first; that the eavesdropper's error with respect to $\bar{x}_k$ grows unbounded when the critical event occurs, as the following lemma states. We denote the mean square estimate of  $\bar{x}_k$ and the corresponding conditional error covariance matrix by: 
\begin{equation}\label{EQN_Secondary_Estimation}
\est_k=\expe{ \bar{x}_{k}\vert \info k},\quad \estm_k=\var{\bar{x}_{k}\vert\info k},
\end{equation}
with $\bar{\Sigma}_0=\var{\bar{x}_{0}}=K\paren{C\bar{P}C'+R}K'$.

\begin{lemma}\label{THM_output_lemma}
	Consider system \eqref{EQN_system} with output model \eqref{EQN_output}, channel model \eqref{EQN_channel_model}, coding scheme \eqref{EQN_ouput_coding} and $\Sigma_0=\bar{P}$. Conditioned on the event $\event=\set{\gamma_{u,k_0}=1,\,\gamma_{k_0}=0}$ for some $k_0\ge 0$, the eavesdropper's estimation error with respect to $\bar{x}_k$ grows unbounded for $k \geq k_0$ as 
		\begin{equation}\label{EQN_rate_lemma_output_secondary}
		\Tr{\estm_k}\ge c \rho\paren{A}^{2\paren{k-k_0}}
		\end{equation}
where $\estm_k$ is the error covariance defined in \eqref{EQN_Secondary_Estimation} and $c>0$ is a constant independent of $k_0$. 
\hfill $\diamond$
\end{lemma}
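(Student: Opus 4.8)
The plan is to transfer the analysis of Section~\ref{Section_Perfect} to the innovation representation of the local Kalman filter. By \eqref{EQN_system_local_Kalman} and the steady--state assumption $\Sigma_0=\bar{P}$ we have $\bar{P}_{k|k-1}=\bar{P}$ and $K_k=K$ for every $k$, so that $\bar{x}_k=A\bar{x}_{k-1}+K\bar{w}_k$, where the innovations $\bar{w}_k=y_k-CA\bar{x}_{k-1}$ are i.i.d.\ zero--mean Gaussian with covariance $C\bar{P}C'+R$ and $\var{K\bar{w}_k}=\bar{\Sigma}_0$. This is structurally identical to the state equation \eqref{EQN_system}, with an ``effective process noise'' $K\bar{w}_k$ of covariance $\bar{\Sigma}_0$ in place of $w_k$ of covariance $Q$; in particular, as in Remark~\ref{REM_intuition}, $z_k=\bar{x}_k-A^{k-t_k}\bar{x}_{t_k}=\sum_{j=t_k+1}^{k}A^{k-j}K\bar{w}_j$ is a linear combination of the innovations accumulated since the reference time.

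First I would reduce to the worst case, as in Lemma~\ref{THM_worst_case}. On $\event$ the user receives $z_{k_0}$, so the reference time becomes $k_0$ immediately after $k_0$; an induction on $k$ then shows that an eavesdropper who additionally knew the innovations $\bar{w}_{k_0+1},\dots,\bar{w}_k$ (a genie at least as strong as the real one, which intercepts only some packets after $k_0$) could reconstruct $d_k:=\bar{x}_k-A^{k-k_0}\bar{x}_{k_0}=\sum_{i=k_0+1}^{k}A^{k-i}K\bar{w}_i$ exactly, and that these extra data are independent of $(\bar{x}_{k_0},\info{k_0})$ given the channel outcomes. Hence the genie--aided error covariance for $\bar{x}_k$ equals $A^{k-k_0}\estm_{k_0}(A')^{k-k_0}$, and since the true eavesdropper has no more information and everything is jointly Gaussian given the channel outcomes, $\estm_k\succeq A^{k-k_0}\estm_{k_0}(A')^{k-k_0}$ a.s.\ on $\event$ for $k\ge k_0$ (this is the mechanism behind the comparison Lemma~\ref{THM_comparison_lemma}). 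Next, on $\event$ the eavesdropper misses $z_{k_0}$, and in $\bar{x}_{k_0}=A\bar{x}_{k_0-1}+K\bar{w}_{k_0}$ the term $K\bar{w}_{k_0}$ is independent of everything in $\info{k_0}$ (the innovation $\bar{w}_{k_0}$ is independent of $\vct{y}_{0:k_0-1}$ and of all channel variables, while $h_{k_0}=\varepsilon$ on $\event$); therefore $\estm_{k_0}\succeq\var{K\bar{w}_{k_0}}=\bar{\Sigma}_0$. Combining, $\estm_k\succeq A^{k-k_0}\bar{\Sigma}_0(A')^{k-k_0}$ on $\event$, and it remains to prove $\Tr\!\big(A^{j}\bar{\Sigma}_0(A')^{j}\big)\ge c\,\rho(A)^{2j}$ for some constant $c>0$ independent of $j$.

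I expect this last step to be the main obstacle, and it is the only point where the output case genuinely departs from Section~\ref{Section_Perfect}: there $Q\succ0$ made the analogue immediate, whereas $\bar{\Sigma}_0$ is typically rank--deficient (rank at most $m$), so one must check that the effective process noise is not annihilated by the unstable modes of $A$. I would argue algebraically from the DARE. Take a (possibly complex) left eigenvector $v\neq0$ of $A$ with $v^{*}A=\lambda v^{*}$, $|\lambda|=\rho(A)>1$; then $v^{*}A^{j}\bar{\Sigma}_0(A')^{j}v=\rho(A)^{2j}\,v^{*}\bar{\Sigma}_0 v$, so by $\Tr M\ge v^{*}Mv/\norm{v}^2$ for $M\succeq0$ it suffices to show $v^{*}\bar{\Sigma}_0 v>0$. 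Rewriting \eqref{EQN_DARE} as $\bar{P}=A(\bar{P}-\bar{\Sigma}_0)A'+Q$ and sandwiching with $v^{*},v$ using $v^{*}A=\lambda v^{*}$ gives $v^{*}\bar{P}v=\rho(A)^2\,v^{*}(\bar{P}-\bar{\Sigma}_0)v+v^{*}Qv$, hence
\[
v^{*}\bar{\Sigma}_0 v=\frac{(\rho(A)^2-1)\,v^{*}\bar{P}v+v^{*}Qv}{\rho(A)^2}>0 ,
\]
since $\bar{P}\succeq Q\succ0$ (from \eqref{EQN_DARE}, as $\bar{P}-\bar{\Sigma}_0\succeq0$), $\rho(A)^2>1$ by Assumption~\ref{ASSUM_unstable}, and $Q\succ0$. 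Taking $c=v^{*}\bar{\Sigma}_0 v/\norm{v}^2$ yields \eqref{EQN_rate_lemma_output_secondary}. (This lemma then feeds Theorem~\ref{THM_output}: writing $x_k=\bar{x}_k+(x_k-\bar{x}_k)$, the local filtering error $x_k-\bar{x}_k$ has bounded covariance and is independent of $\info{k}$, so $\Tr P_k\ge\Tr\estm_k-c'$, which combined with \eqref{EQN_rate_lemma_output_secondary} gives \eqref{EQN_output_main_result_asymptotics}.)
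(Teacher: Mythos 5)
Your proposal is correct and follows essentially the same route as the paper: reduce to the worst case for the surrogate system $\bar{x}_k=A\bar{x}_{k-1}+K\bar{w}_k$ with effective noise covariance $\bar{Q}=K(C\bar{P}C'+R)K'$, bound $\estm_{k_0}\succeq\bar{Q}$, project onto a dominant left eigenvector, and establish $v^{*}\bar{Q}v>0$ from the DARE using $Q\succ 0$ and $\rho(A)>1$. The only differences are cosmetic: you prove the key positivity directly by solving the sandwiched DARE for $v^{*}\bar{Q}v$ where the paper argues by contradiction, and you are slightly more careful in allowing the dominant eigenvector to be complex.
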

We use this intermediate result to show that the eavesdropper's error with respect to $x_k$ also grows unbounded when the critical event occurs. The main idea is to lower-bound the error $\Tr P_k$ in terms of $\Tr \estm_k$. The proof is included in the Appendix.

\section{Simulations}\label{Section_Simulations}
In this section we illustrate the efficiency of our proposed coding schemes in numerical simulations. We consider two scenarios. In the first one, we contrast the performance achieved by the State-Secrecy Codes with the one achieved by the mechanisms in \cite{tsiamis2016state}, \cite{leong2017remote}. The comparison is made assuming direct state measurements. In the second one, we compare the user's and eavesdropper's estimation performance in the case of output measurements. The system under consideration has state matrix $A=\matr{{cc}1.2& 0.1\\0& 0.5}$ and process noise covariance matrix $Q=\matr{{cc}0.6& 0.2\\0.2& 0.5}$. For the channel model, we assume that the channel outcomes are independent across time and stationary with probabilities $P\paren{\gamma_{u,k}=i,\gamma_{k}=j}=p_{ij}$, for $i,j\in\set{0,1}$.
For the estimation scheme of the eavesdropper we used equation \eqref{EQN_fundamental_conditional_property} in Appendix. Since the user can decode all signals, we used the formula:
\begin{equation*}
P_{u,k}=
\left\{\begin{aligned}
&\bar{P}-KC\bar{P}&&\text{ if }\gamma_{u,k}=1\\
&AP_{u,k-1}A'+Q&&\text{ if }\gamma_{u,k}=0
\end{aligned}\right.
\end{equation*}
where $\bar{P}-KC\bar{P}=0$ in the case of direct state measurements.

\begin{figure}[t] \centering{
		\includegraphics[scale=0.48]{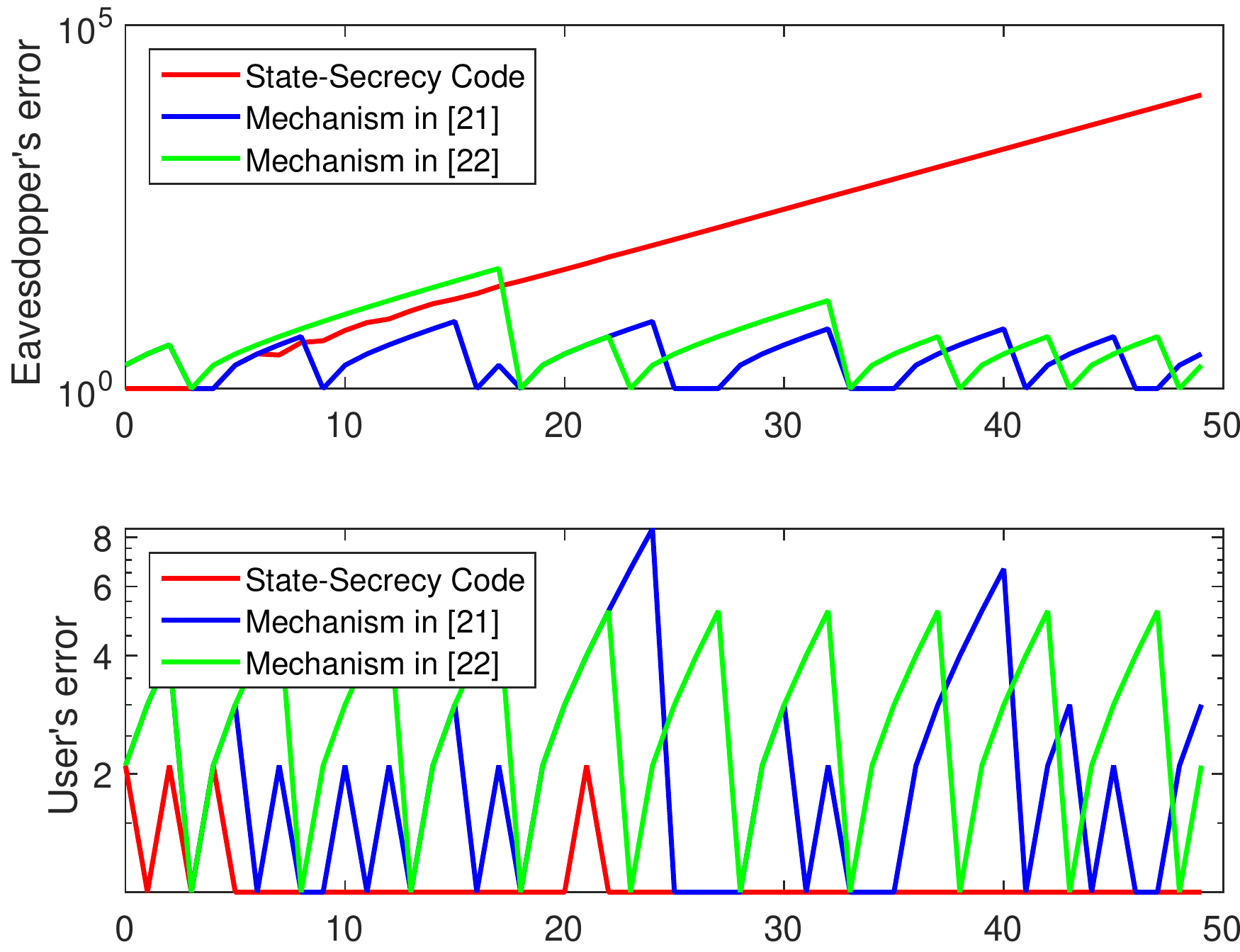}}   
	\caption {We compare our proposed coding scheme with the mechanisms in \cite{tsiamis2016state}, \cite{leong2017remote} for a typical channel outcome sequence in the case of direct state measurements. For the log-plots, we use function $\log(x+1)$ instead of $\log(x)$. We observe that it significantly outperforms the mechanisms in both confidentiality (eavesdropper's error) and efficiency (user's error).}
	\label{Figure_compare_secrecy}
\end{figure}
In the first scenario, with direct state information ($C=I$, $R=0$), we assume that the channel outcomes have the probabilities $p_{11}=0.54$, $p_{00}=0.04$, $p_{01}=0.06$ and $p_{10}=0.36$ and initial state error covariance matrix $\Sigma_0=Q$. We compare the performance of our code with the one of the mechanisms in \cite{tsiamis2016state} and \cite{leong2017remote}--see Figure~\ref{Figure_compare_secrecy}. The comparison is made for the same sequence of channel outcomes, with respect to the user's and eavesdropper's estimation errors ($\Tr\paren{P_{u,k}}$ and $\Tr\paren{P_{k}}$ respectively). For the mechanism in \cite{tsiamis2016state}, which randomly withholds state information with probability $p$, we selected  $p=0.29$. For the infinite horizon mechanism in \cite{leong2017remote}, which transmits state information only if the user loses more than $s$ consecutive packets, we used $s=5$. 
Notice that the eavesdropper's error is small very often under the mechanisms in \cite{tsiamis2016state} and \cite{leong2017remote}, since unboundedness is guaranteed in expectation, not almost surely. In contrast, our coding scheme achieves  unbounded eavesdropper's error for every channel sequence with probability one. Also notice that the user's estimation performance is degraded in \cite{tsiamis2016state}, \cite{leong2017remote}.

\begin{figure}[t] \centering{
		\includegraphics[scale=0.48]{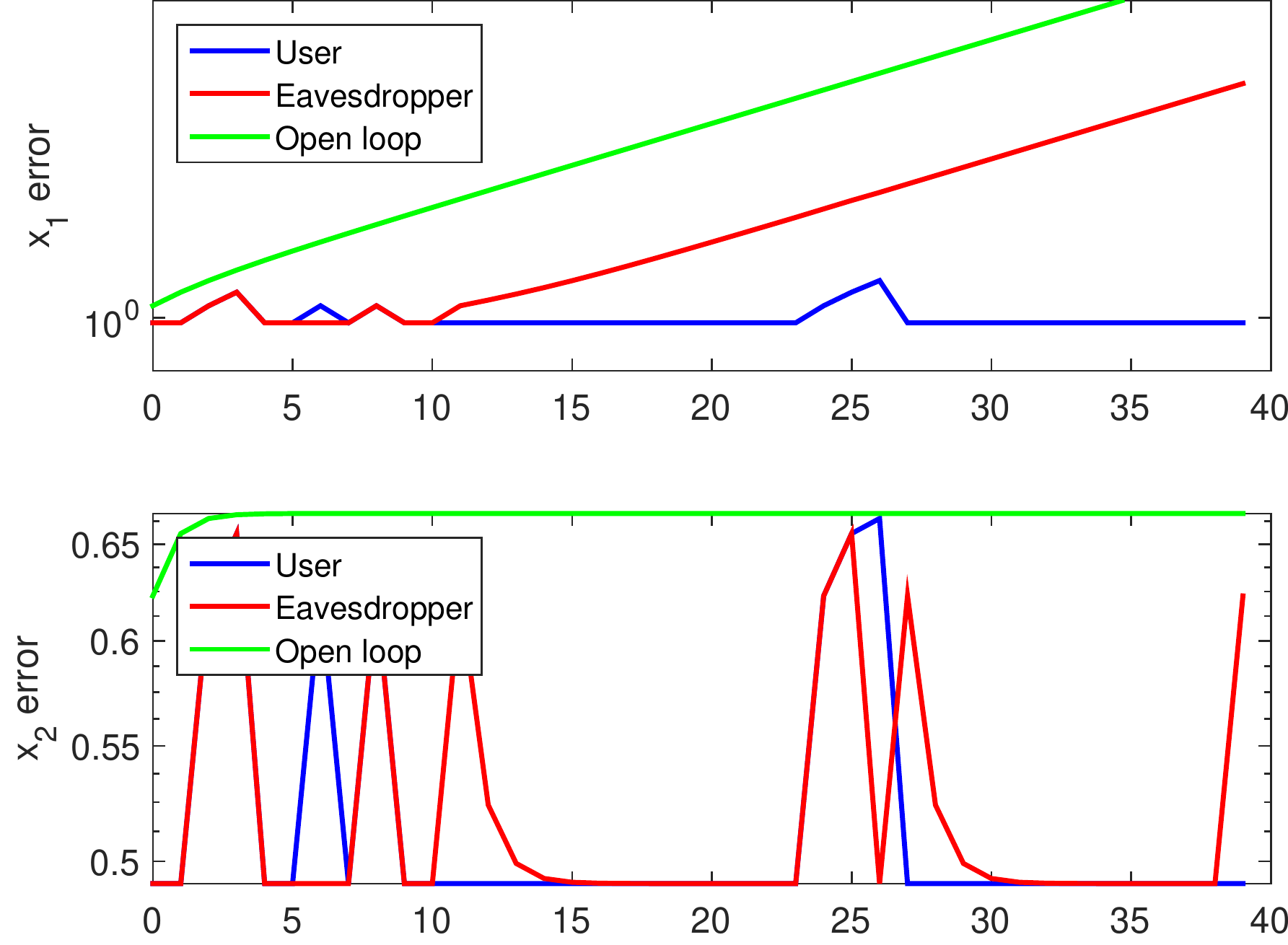}}   
	\caption {We compare the eavesdropper's, user's and open loop errors for the states $x_1$ and $x_2$ in the case of output measurements. Notice that the critical event occurs at time $k=11$. The eavesdropper's error regarding the unstable part grows unbounded with asymptotically the same rate as the open-loop error. The eavesdropper still has knowledge about the stable state $x_2$. However, the open-loop prediction error for $x_2$ shows that the eavesdropper has bounded error with respect to the stable dynamics, regardless the code. 
	}
	\label{Figure_Individual_State}
\end{figure}
For the second scenario with output state measurements, we assume that $C=\matr{{cc}1 & 1}$, $R=1$  and the channel outcomes have the probabilities $p_{11}=0.7$, $p_{00}=0.1$, $p_{01}=0.1$ and $p_{10}=0.1$. We also assume $\Sigma_0=\bar{P}$ (see \eqref{EQN_DARE}). In Figure \ref{Figure_Individual_State}, we plot the user's and eavesdropper's estimation errors over time for the states $x_1$ 
and $x_2$, 
i.e. the two diagonal elements of the covariance matrices respectively. 
We also plot the open-loop prediction error (see \eqref{EQN_Open_loop_covariance}), to compare it with the eavesdropper's error. 
As shown in Figure \ref{Figure_Individual_State}, the eavesdropper's error for the unstable state $x_1$ starts growing unbounded at time $k=11$, when the first critical event occurs; the rate of increase is the asymptotically the same as in the open-loop case. It is worth noting that the eavesdropper's error for the stable state $x_2$ remains bounded. 
But it is fundamentally impossible to have unbounded error for the stable part, regardless the code, as even the open-loop prediction error is bounded for state $x_2$.

\section{Conclusion}\label{Section_Conclusion}
The presence of an eavesdropper adds new challenges to the problem of remote estimation. Nonetheless, by using a simple State-Secrecy Code, based on acknowledgment signals from the user back to the sensor, we can achieve powerful confidentiality guarantees with minimal computational cost. At the same time the user's estimation performance is optimal. By exploiting the random packet erasures, the process noise, and the unstable dynamics, perfect secrecy is achieved with just a single occurrence of the critical event, when the user receives more information than the eavesdropper.

Future work includes an implementation and experimental evaluation of the proposed scheme. We also seek to compare the computational load of encryption with that of our code.
Another open question is how to adapt State-Secrecy Codes to offer more confidentiality guarantees for the stable part of the state. Finally, adapting State-Secrecy Codes to closed-loop control systems is an interesting future direction. 

\appendices
\appendices

\section{Estimation formulas}
In this appendix, we derive two formulas for the eavesdropper's estimation error covariance in the case of state measurements. The first one, equation \eqref{EQN_fundamental_conditional_property} in Lemma \ref{THM_conditional_expectation}, expresses the conditional expectation with respect to the non-Gaussian  eavesdropper's information $\info{k}$ (see \eqref{EQN_eavesdropper_information}) in terms of conditional expectations with respect to Gaussian variables. The second one, equation \eqref{EQN_Conditional_Recursive_formula} in Proposition \ref{THM_recursive}, is the main estimation formula, which will be used to prove Lemma \ref{THM_worst_case}. An identical analysis can be followed to find the estimation formulas for $\est_k,$ $\estm_k$ in \eqref{EQN_Secondary_Estimation} for the case of output measurements. Thus, it is omitted.

We denote by $\x{k}=\matr{{ccc}x_k',\dots,x_0'}'$ the batch state vector up to time $k$ taking values in $\REAL^{(k+1)n}$. 
We also denote the batch channel outcomes up to time $k$ by  
\begin{equation}\label{EQN_batch_channel_outcomes}
\g{k}=\paren{\vct{\gamma}_{u,0:k},\vct{\gamma}_{0:k}}=\paren{\gamma_{u,0},\dots,\gamma_{u,k},\gamma_{0},\dots,\gamma_{k}}
\end{equation}
taking values in the set $\set{0,1}^{2k+2}$. 
Now let \[s_{0:k}=\paren{s_{u,0},\dots,s_{u,k},s_{0},\dots,s_{k}}\] be any fixed element of $\set{0,1}^{2k+2}$. For symbol economy, we write $s=s_{0:k}$ and we omit the subscript $0:k$. Given this element $s$, we define  $s_{0:m}=\paren{s_{u,0},\dots,s_{u,m},s_{0},\dots,s_{m}}$ to be the part of $s$ until time $m$, for $m\le k$. We also define $s_{u,0:m}=\paren{s_{u,0},\dots,s_{u,m}}$ to be the part of $s$ related to the user until time $m$, for $m\le k$.

 Next, we alternatively describe the information $\mathcal{I}_k$  that the eavesdropper has on the event that the channel outcomes $\g{k}$ take exactly the value $s$. Formally it is the set
\begin{equation}\label{EQN_induced_info}
\begin{aligned}
&\ifo{s}=\set{z_m\paren{s}:s_{m}=1,\,m\le k},\,\text{where}\\
&z_m\paren{s}=x_m-A^{m-t_m\paren{s_{u,0:m-1}}}x_{t_m\paren{s_{u,0:m-1}}}
\end{aligned}
\end{equation}
and $t_m\paren{\cdot}$ is defined in \eqref{EQN_reference_time}. In other words, given fixed channel outcomes, the eavesdropper only keeps the successfully intercepted channel outputs from $\vct{h}_{0:k}$ (see \eqref{EQN_channel_model}). Notice that every element of $\ifo{s}$ depends linearly on $\x{k}$.  The next example clarifies this definition. 
\begin{example}
Suppose we are given $s=s_{0:2}=\paren{s_{u,0},s_{u,1},s_{u,2},s_{0},s_{1},s_{2}}=\paren{1,0,1,1,0,1}$. Then, $t_0(s)=-1$, $t_1(s)=t_2(s)=0$
and  $$\mathcal{J}_2\paren{s}=\set{x_2-A^2x_0,x_0}.$$
Notice that $\mathcal{J}_2\paren{s}$ depends only on $\x{2}$ and is Gaussian.
\hfill $\diamond$
\end{example}

\subsection{Sigma algebra of $\info{k}$.}
Before we prove the estimation formulas, we describe the sigma-algebra $\sigma\paren{\info{k}}$, induced by the eavesdropper's information $\info{k}$ (see \eqref{EQN_eavesdropper_information}). 
The following lemma describes the sigma algebra $\sigma\paren{\info{k}}$, in terms of $\sigma\paren{\ifo{s}}$. 
\begin{lemma}[Sigma algebra of $\info{k}$]\label{LEM_sigma_algebra}
Fix a $k$ and let $S=\set{0,1}^{2k+2}$. Let $\info{k}$, $\g{k}$, and $\ifo{s}$ be as defined in \eqref{EQN_eavesdropper_information}, \eqref{EQN_batch_channel_outcomes}, and \eqref{EQN_induced_info}. Then every set $D$ in the sigma algebra of $\info{k}$  has the following form:
\begin{equation}\label{EQN_sigma_algebra_form}
D=\bigcup_{s\in S}\set{\g{k}=s,\ifo{s}\in F_s},
\end{equation}
for some $F_s\in \sigma\paren{\ifo{s}}$, for all $s\in S$. 
\end{lemma}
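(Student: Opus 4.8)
The plan is to characterize $\sigma(\info{k})$ explicitly by partitioning the probability space according to the value of the batch channel outcome vector $\g{k}$, and on each cell of this partition identifying the information content of $\info{k}$ with that of the Gaussian vector $\ifo{s}$. Recall $\info{k}=\set{\vct{h}_{0:k},\vct{\gamma}_{u,0:k}}$, so $\sigma(\info{k})=\sigma(\vct{h}_{0:k},\vct{\gamma}_{u,0:k})$. First I would observe that $\vct{\gamma}_{u,0:k}$ is determined by $\g{k}$ (it is a subvector), and that each intercepted output $h_m$ equals either $z_m$ or the symbol $\varepsilon$ according to whether $\gamma_m=1$ or $0$; moreover, on the event $\set{\g{k}=s}$, the reference times $t_m$ appearing in $z_m$ are the deterministic quantities $t_m(s_{u,0:m-1})$, so $z_m$ restricted to this event coincides with the linear function $z_m(s)$ of $\x{k}$ defined in \eqref{EQN_induced_info}. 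Hence on $\set{\g{k}=s}$ the information $\info{k}$ reduces precisely to the collection $\ifo{s}=\set{z_m(s):s_m=1,\,m\le k}$ together with knowing that $\g{k}=s$.

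The key steps, in order: (1) Since $S=\set{0,1}^{2k+2}$ is finite, the events $\set{\g{k}=s}$, $s\in S$, form a finite measurable partition of $\Omega$; therefore any $D\in\sigma(\info{k})$ decomposes as $D=\bigcup_{s\in S}\paren{D\cap\set{\g{k}=s}}$, and it suffices to show $D\cap\set{\g{k}=s}=\set{\g{k}=s,\ifo{s}\in F_s}$ for some $F_s\in\sigma(\ifo{s})$. (2) Prove the ``$\subseteq$'' direction by showing that $\sigma(\info{k})\cap\set{\g{k}=s}$ — the trace sigma-algebra on the cell — is generated by the sets $\set{\g{k}=s}\cap\set{\ifo{s}\in F}$, $F\in\sigma(\ifo{s})$: indeed, a generating set for $\sigma(\info{k})$ is given by preimages of Borel sets under $\vct{\gamma}_{u,0:k}$ and under the map $h_m$, and intersecting each such preimage with $\set{\g{k}=s}$ yields either $\emptyset$, the whole cell, or a set of the form $\set{\g{k}=s}\cap\set{z_m(s)\in B}$, all of which lie in the claimed collection. (3) Conclude by taking, for each $s$, the set $F_s\in\sigma(\ifo{s})$ guaranteed by step (2) (with $F_s=\emptyset$ for those $s$ for which $D$ contributes nothing on that cell), so that $D=\bigcup_{s\in S}\set{\g{k}=s,\ifo{s}\in F_s}$ as required. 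A standard $\pi$–$\lambda$ / good-sets argument upgrades the statement from generators to all of $\sigma(\info{k})$.

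The step I expect to require the most care is step (2): pinning down that the trace of $\sigma(\info{k})$ on the cell $\set{\g{k}=s}$ is \emph{exactly} $\sigma(\ifo{s})$ restricted to that cell, with no extra information leaking in. The subtle point is that the functions $h_m$ and $t_m$ are random \emph{before} conditioning, so one must argue that on $\set{\g{k}=s}$ every measurable function of $\info{k}$ agrees with a measurable function of $\ifo{s}$; this is where the finiteness of $S$ and the fact that $t_m$ depends only on the (now-fixed) user outcomes $s_{u,0:m-1}$ are essential. A clean way to organize this is to note that $\mathbbm{1}_{\set{\g{k}=s}}\cdot h_m = \mathbbm{1}_{\set{\g{k}=s}}\cdot(s_m z_m(s) + (1-s_m)\varepsilon)$ as a measurable identity, which makes the reduction transparent. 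Everything else is bookkeeping over the finite index set $S$.
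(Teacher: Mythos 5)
Your proposal is correct and follows essentially the same route as the paper's proof: partition $\Omega$ by the value of $\g{k}$ (using that $\gamma_m=0$ iff $h_m=\varepsilon$, so the dropped slots carry no information beyond $s$ itself and the received slots reduce to the deterministic linear functions $z_m(s)$ of $\x{k}$), then identify the trace of $\sigma(\info{k})$ on each cell with $\sigma(\ifo{s})$. Your version merely spells out the generator/$\pi$--$\lambda$ bookkeeping that the paper's terser argument leaves implicit.
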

\begin{proof}
Since $\g{k}$ is a discrete-valued random vector, we can partition every set $D\in \sigma\paren{\info{k}}$ according to the values of $\g{k}$ as:
\[D=\bigcup_{s\in S} D_s,\]
where \[D_s=\set{\g{k}=s,\vct{h}_{0:k}\in E_s},\]
for some set $E_s$ that belongs in $\sigma\paren{\vct{h}_{0:k}}$. But we have that:
\[\set{\gamma_t=0}\Leftrightarrow \set{h_t=\varepsilon},\text{ for all }0\le t\le k. \]
Thus, if $\gamma_t=0$ in in $D_s$, this fully describes $h_t$ in $D_s$.
As a result, the set $D_s$ can be equivalently described by
\[D_s=\set{\g{k}=s,\ifo{s}\in F_s},\]
for some $F_s\in \sigma\paren{\ifo{s}}$.
\end{proof}

\subsection{First estimation formula}
Here, we express the conditional expectation with respect to $\info{k}$ in terms of conditional expectations with respect to $\ifo{s}$. This is presented in the following lemma, and is a consequence of Lemma \ref{LEM_sigma_algebra} and independence of $\x{k}$ and $\g{k}$.
 \begin{lemma}\label{THM_conditional_expectation}
 Fix a $k$ and let $\info{k}$, $\g{k}$, and $\ifo{s}$ be as defined in \eqref{EQN_eavesdropper_information}, \eqref{EQN_batch_channel_outcomes}, and \eqref{EQN_induced_info}. Suppose $Y=Y\paren{\g{k}}$ is an integrable random vector, such that:
 \begin{equation*}
 Y\paren{\g{k}}=\sum_{s\in S}Y\paren{s}\1_{\set{\g{k}=s}}
 \end{equation*}
 where $Y\paren{s}$ depends only on $\x{m}$, for some $m$ (possibly different than $k$). Then:
 \begin{equation} \label{EQN_fundamental_conditional_property}
 \expe{Y|\info{k}}=\expe{Y\paren{s}|\ifo{s}}\text{ in }\set{\g{k}=s}.
 \end{equation}
  By $\1$ we denote the indicator function.\hfill $\diamond$
 \end{lemma}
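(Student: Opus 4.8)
The plan is to verify the defining property of conditional expectation directly: namely, that the random vector which equals $\expe{Y(s)\mid\ifo{s}}$ on each event $\set{\g{k}=s}$ is $\sigma(\info{k})$-measurable and that, for every $D\in\sigma(\info{k})$, integrating this vector over $D$ yields the same result as integrating $Y$ over $D$. Concretely, define $W=\sum_{s\in S}\expe{Y(s)\mid\ifo{s}}\1_{\set{\g{k}=s}}$; the claim is that $W=\expe{Y\mid\info{k}}$ a.s., which by the almost-sure uniqueness of conditional expectation reduces to checking (a) $W$ is $\sigma(\info{k})$-measurable and (b) $\expe{W\1_D}=\expe{Y\1_D}$ for all $D\in\sigma(\info{k})$.

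For (a), $\expe{Y(s)\mid\ifo{s}}$ is $\sigma(\ifo{s})$-measurable, so $\expe{Y(s)\mid\ifo{s}}\1_{\set{\g{k}=s}}$ is measurable with respect to the sigma algebra generated by $\ifo{s}$ and $\set{\g{k}=s}$, which is contained in $\sigma(\info{k})$; summing over the finite set $S$ keeps measurability. For (b), I would use Lemma \ref{LEM_sigma_algebra} to write a general $D\in\sigma(\info{k})$ as $D=\bigcup_{s\in S}\set{\g{k}=s,\ifo{s}\in F_s}$ with $F_s\in\sigma(\ifo{s})$, a disjoint union since the events $\set{\g{k}=s}$ partition $\Omega$. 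Then both $\expe{W\1_D}$ and $\expe{Y\1_D}$ split as sums over $s$ of terms involving only $\1_{\set{\g{k}=s}}\1_{\set{\ifo{s}\in F_s}}$ times $\expe{Y(s)\mid\ifo{s}}$ and $Y(s)$ respectively. It therefore suffices to show, for each fixed $s$,
\begin{equation*}
\expe{\expe{Y(s)\mid\ifo{s}}\,\1_{\set{\g{k}=s}}\1_{\set{\ifo{s}\in F_s}}}=\expe{Y(s)\,\1_{\set{\g{k}=s}}\1_{\set{\ifo{s}\in F_s}}}.
\end{equation*}

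The crux is the independence argument establishing this identity. Here $Y(s)$ and $\ifo{s}$ are functions of $\x{m}$ (resp.\ $\x{k}$), hence of the noises and initial condition, whereas $\1_{\set{\g{k}=s}}$ is a function of the channel outcomes $\g{k}$, which by assumption are independent of $x_0$ and all $w_k,v_k$; consequently $\1_{\set{\g{k}=s}}$ is independent of the pair $(Y(s),\ifo{s})$. Using this independence, one may replace $\1_{\set{\g{k}=s}}$ by its expectation $\mathbb{P}(\g{k}=s)$ on both sides and factor it out (a small subtlety: if the two index ranges $m$ and $k$ differ, take $\x{\max(m,k)}$ as the common driving vector, still independent of the channel). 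Cancelling $\mathbb{P}(\g{k}=s)$ when it is positive — and noting both sides vanish when it is zero — the identity reduces to $\expe{\expe{Y(s)\mid\ifo{s}}\1_{\set{\ifo{s}\in F_s}}}=\expe{Y(s)\1_{\set{\ifo{s}\in F_s}}}$, which is exactly the defining property of $\expe{Y(s)\mid\ifo{s}}$ since $\set{\ifo{s}\in F_s}\in\sigma(\ifo{s})$. The main obstacle, and the only place requiring care, is this bookkeeping around independence and the possibility $\mathbb{P}(\g{k}=s)=0$; everything else is the routine "measurable plus integral-matching implies conditional expectation" template.
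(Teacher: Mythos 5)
Your proposal is correct and follows essentially the same route as the paper's proof: both reduce the claim via Lemma \ref{LEM_sigma_algebra} to an integral identity over sets of the form $\set{\g{k}=s,\ifo{s}\in F_s}$, then use independence of the channel outcomes from the state trajectory to factor out the indicator $\1_{\set{\g{k}=s}}$ and invoke the defining property of $\expe{Y(s)\mid\ifo{s}}$. The only cosmetic difference is that you verify both defining properties of the candidate $W$ explicitly, whereas the paper starts from $\expe{Y\mid\info{k}}$ (which is measurable by construction) and manipulates it directly.
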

 \begin{proof}
 	Define $S=\set{0,1}^{2k+2}$.
Notice that $\g{k}$ is $\sigma\paren{\info{k}}$-measurable. Thus, we have:
 	\begin{align*}
 	\expe{Y|\info{k}}&= \sum_{s\in S}\1_{\set{\g{k}=s}}\expe{Y(s)|\info{k}}\\
 	&=\expe{Y(s)|\info{k}}\text{ in }\set{\g{k}=s}.
 	\end{align*}
 	Thus, it is sufficient to show that \begin{equation}\label{EQN_Conditional_lemma_goal}
 	\1_{\set{\g{k}=s}}\expe{Y(s)|\info{k}}=\1_{\set{\g{k}=s}}\expe{Y(s)|\ifo{s}}.
 	\end{equation}
 	Since $\1_{\set{\g{k}=s}}\expe{Y(s)|\info{k}}=\expe{\1_{\set{\g{k}=s}} Y(s)|\info{k}}$, the truth of \eqref{EQN_Conditional_lemma_goal} can be verified if we show the basic property of conditional expectation: \[\expe{\1_{D}\1_{\set{\g{k}=s}}Y(s)}=\expe{\1_{D}\1_{\set{\g{k}=s}}\expe{Y(s)|\ifo{s}}},\] for any $D\in \sigma\paren{\info{k}}$. 
 	From Lemma \ref{LEM_sigma_algebra}, we have:
 	  \[D=\bigcup_{s\in S}D_s,\]
 	where
 	\[D_s=\set{\g{k}=s,\ifo{s}\in F_s},\] for some $F_s\in\sigma\paren{ \ifo{s}}$. As a result, $\1_{\set{\g{k}=s}}\1_{D}=\1_{\set{\g{k}=s}}\1_{\ifo{s}\in F_s}$. 
 	
 	Observe that  $\ifo{s}$ and $Y(s)$ depend only on the values of $x_t$, $t\le m$, while the indicator $\1_{\set{\g{k}=s}}$ depends on the channel outcomes. Hence:
 	\begin{align*}
 	&\expe{\1_{D}\1_{\set{\g{k}=s}}Y(s)}= \expe{\1_{\set{\g{k}=s}}\1_{\ifo{s}\in F_s}Y(s)}\\
 	&=\expe{\1_{\set{\g{k}=s}}}\expe{\1_{\ifo{s}\in F_s}Y(s)}\\
 	&=\expe{\1_{\set{\g{k}=s}}}\expe{\1_{\ifo{s}\in F_s}\expe{Y(s)|\ifo{s}}}\\
 	&=\expe{\1_{\set{\g{k}=s}}\1_{\ifo{s}\in F_s}\expe{Y(s)|\ifo{s}}}\\
 	&=\expe{\1_{D}\1_{\set{\g{k}=s}}\expe{Y(s)|\ifo{s}}}
 	\end{align*}
 	where the second and fourth equalities follow from independence, while the third follows from the properties of conditional expectation.
 \end{proof}
 The benefit of Equation \eqref{EQN_fundamental_conditional_property} is that it allows us to work with $\ifo{s}$, which includes only Gaussian elements, that depend linearly on $\x{k}$. In contrast, it is not easy to work directly with $\info{k}$, which is non-Gaussian.
\subsection{Second estimation formula}
 We finally prove our main estimation formula for $P_k$ by leveraging the technical intermediate formula \eqref{EQN_fundamental_conditional_property}.
\begin{proposition}[Estimation formula]\label{THM_recursive}
	Consider system \eqref{EQN_system}, channel \eqref{EQN_channel_model} under coding scheme \eqref{EQN_Coding}. Fix any $k \ge  0$. Let the covariance matrix of $x_k$ and $\y{k}$ given $\info{{k-1}}$ be written in a block form: $$\var{\matr{{c}x_{k}\\\y{k}}|\info{k-1}}=\matr{{cc}\Sigma_{xx} & \Sigma_{xz}\\\Sigma_{zx}&\Sigma_{zz}}$$
	Then, $\Sigma_{xx}=AP_{k-1}A'+Q$ if $k>0$ and $\Sigma_{xx}=\Sigma_0$ if $k=0$, where $P_{k-1}$ is the estimation error covariance of the eavesdropper at time $k-1$ defined in \eqref{EQN_estimate_variance_eavesdropper}, and the conditional covariance at time $k$ is given by:
	\begin{equation}	\label{EQN_Conditional_Recursive_formula}
	P_{k}=\Sigma_{xx}-\gamma_{2,k}\Sigma_{xz}\paren{\Sigma_{zz}}^{\dagger}\Sigma_{zx},
	\end{equation}
		where $(\cdot)^{\dagger}$ denotes the Moore-–Penrose pseudoinverse \cite{horn2012matrix}.
	\hfill $\diamond$
\end{proposition}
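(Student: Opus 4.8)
The plan is to reduce the computation of $P_k=\var{x_k|\info k}$, which conditions on the non--Gaussian information set $\info k$, to an ordinary linear--Gaussian measurement update, by invoking the first estimation formula \eqref{EQN_fundamental_conditional_property}. I would fix $k$ and an arbitrary channel realization $s\in\set{0,1}^{2k+2}$, and carry out the whole argument on the event $\set{\g k=s}$; since $s$ is arbitrary, the resulting identities then hold almost surely. On this event the reference time $t_k$ is frozen, so the code \eqref{EQN_Coding} becomes a fixed linear map of the batch state $\x k$, namely the Gaussian vector $z_k(s)=x_k-A^{k-t_k(s)}x_{t_k(s)}$, which together with $x_k$ and every element of $\ifo s$ is jointly Gaussian. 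Applying Lemma~\ref{THM_conditional_expectation} entrywise to $Y\in\set{x_k,\ z_k,\ x_kx_k',\ x_kz_k',\ z_kz_k'}$ --- each of which, for a fixed channel value, depends only on $\x k$ --- replaces conditioning on $\info{k-1}$ and $\info k$ by conditioning on the jointly Gaussian sets $\mathcal{J}_{k-1}\paren s$ and $\ifo s$. In particular the block matrix in the statement coincides on $\set{\g k=s}$ with $\var{\paren{x_k',z_k'}'|\mathcal{J}_{k-1}\paren s}$, so that its blocks $\Sigma_{xx},\Sigma_{xz},\Sigma_{zz},\Sigma_{zx}$ are the usual Gaussian conditional (cross-)covariances.

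\textbf{The $\Sigma_{xx}$ term.} For $k\ge1$ I would argue directly from $x_k=Ax_{k-1}+w_k$ together with the independence of $w_k$ from $\set{x_0,\dots,x_{k-1}}$ and from all channel outcomes, hence from $\info{k-1}$: then $\expe{w_k|\info{k-1}}=0$, the decomposition $x_k-\expe{x_k|\info{k-1}}=A\paren{x_{k-1}-\hat x_{k-1}}+w_k$ has a vanishing cross term in conditional covariance, and therefore $\Sigma_{xx}=A\var{x_{k-1}|\info{k-1}}A'+Q=AP_{k-1}A'+Q$ by \eqref{EQN_estimate_variance_eavesdropper}. For $k=0$, $\info{-1}=\emptyset$ gives $\Sigma_{xx}=\var{x_0}=\Sigma_0$.

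\textbf{The measurement update.} Still on $\set{\g k=s}$: if $\gamma_{2,k}=0$ then $h_k=\varepsilon$, no packet is intercepted at time $k$, so $\ifo s=\mathcal{J}_{k-1}\paren s$ and hence $P_k=\var{x_k|\ifo s}=\var{x_k|\mathcal{J}_{k-1}\paren s}=\Sigma_{xx}$, which is \eqref{EQN_Conditional_Recursive_formula} with the correction term switched off. If $\gamma_{2,k}=1$ then $\ifo s=\mathcal{J}_{k-1}\paren s\cup\set{z_k(s)}$, and since all the relevant variables are jointly Gaussian I would apply the conditional--covariance update identity $\var{X|W,Z}=\var{X|W}-\cov{X,Z|W}\paren{\var{Z|W}}^{\dagger}\cov{Z,X|W}$ with $X=x_k$, $Z=z_k(s)$, $W=\mathcal{J}_{k-1}\paren s$; this holds with the Moore--Penrose pseudoinverse even when $\var{Z|W}$ is singular, since $\cov{Z,X|W}$ always lies in its column space. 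Identifying $\var{X|W}=\Sigma_{xx}$ and the remaining quantities with $\Sigma_{xz},\Sigma_{zz},\Sigma_{zx}$ from the previous step gives $P_k=\Sigma_{xx}-\Sigma_{xz}\Sigma_{zz}^{\dagger}\Sigma_{zx}$. Since $\gamma_{2,k}\in\set{0,1}$ selects between the two cases, \eqref{EQN_Conditional_Recursive_formula} holds on $\set{\g k=s}$, hence everywhere.

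\textbf{Main obstacle.} The only genuine difficulty is the first step: passing from conditioning on $\info k$, which entangles the discrete channel outcomes with the continuous states and therefore obeys no direct Kalman recursion, to conditioning on the purely Gaussian family $\ifo s$. This is precisely what Lemma~\ref{THM_conditional_expectation} --- resting on the sigma-algebra description of Lemma~\ref{LEM_sigma_algebra} and on the independence of $\x k$ from the channel --- delivers; once it is available, the rest is routine linear--Gaussian bookkeeping. The pseudoinverse is kept only to cover a possible rank deficiency of the encoded packet's conditional covariance, which can occur in the output-measurement variant \eqref{EQN_ouput_coding}, whereas $Q\succ0$ makes it a genuine inverse in the state-measurement case treated here.
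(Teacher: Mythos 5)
Your proposal is correct and follows essentially the same route as the paper's proof: reduce to the Gaussian information set $\ifo{s}$ on each event $\set{\g{k}=s}$ via Lemma~\ref{THM_conditional_expectation}, compute $\Sigma_{xx}$ from $x_k=Ax_{k-1}+w_k$ and the independence of $w_k$ from $\info{k-1}$, and then split on whether the time-$k$ packet is intercepted, applying the Gaussian/Schur-complement update in the intercepted case. Your added remarks on why the pseudoinverse is legitimate (the cross-covariance lying in the column space of $\Sigma_{zz}$) and on when $\Sigma_{zz}$ can actually be singular are a small refinement the paper leaves implicit.
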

\begin{proof}
	First, we prove the formula for $\Sigma_{xx}$.  When $k=0$, $\info{-1}$ is an empty set, thus, $\Sigma_{xx}=\Sigma_0$.
	When $k>0$, we argue that $\Sigma_{xx} = AP_{k-1}A'+Q$. By the system dynamics in \eqref{EQN_system} we have that $x_k = A x_{k-1} + w_k$. But by assumption the noise $w_k$ is independent of the state vectors $x_0, \ldots, x_{k-1}$ and of $\g{k}$ and, hence, also independent of $\info{k-1}$. Thus,  the prediction estimation is $\expe{x_k|\info{k-1}}=A \hat{x}_{k-1}$ and the covariance of $x_k-\expe{x_k|\info{k-1}}$ given $\info{k-1}$ equals $\Sigma_{xx}=AP_{k-1}A'+Q$.

	Second, we prove \eqref{EQN_Conditional_Recursive_formula}. Fix an element $s=s_{0:k} \in \set{0,1}^{2k+2}$, and consider the event $C=\set{\g{k}=s_{0:k}}$. 
	By the definition \eqref{EQN_estimate_variance_eavesdropper} of $P_k$ and equation \eqref{EQN_fundamental_conditional_property}, we have $P_{k}=\var{x_k|\ifo{s_{0:k}}}$ in $C$.
	There exist two cases: 
	
	\noindent \textit{Case I:} $s_{k}=0$. In this case, by the definition \eqref{EQN_induced_info}, we have $\ifo{s_{0:k}}=\ifo{s_{0:k-1}}$. Therefore:
\begin{align*}
P_{k}&=\var{x_k|\ifo{s_{0:k}}}=\var{x_k|\ifo{s_{0:k-1}}}\\&=\var{x_k|\info{k-1}}=\Sigma_{xx}\text{ in }C.
\end{align*}
where the third equality follows from \eqref{EQN_fundamental_conditional_property}.
	
	\noindent \textit{Case II:} $s_{k}=1$. Here, by \eqref{EQN_fundamental_conditional_property}: \begin{align*}P_k& = \var{x_k|\ifo{s_{0:k}}}\\
	& = \var{x_k|\ifo{s_{0:k-1}}, z_k\paren{s_{0:k}}},\,\text{in }C,
	\end{align*}
	 where $z_k\paren{s_{0:k}}$ is defined in \eqref{EQN_induced_info}. Since all variables are Gaussian, we can compute the posterior estimation error using the Schur complement formula applied to the covariance matrix of $x_k$ and $z_k\paren{s_{0:k}}$ given $\ifo{s_{0:k-1}}$ (see \cite{anderson2005optimal}).
	By \eqref{EQN_fundamental_conditional_property}, in $C$ we have 
	\begin{align*}&\var{\matr{{c}x_{k}\\ z_k\paren{s_{0:k}}}|\ifo{s_{0:k-1}}}=\matr{{cc}\Sigma_{xx} & \Sigma_{xz}\\\Sigma_{zx}&\Sigma_{zz}}.
	\end{align*}
	Hence, by the Schur's complement formula, we have $P_{k}=\Sigma_{xx}-\Sigma_{xz}\paren{\Sigma_{zz}}^{\dagger}\Sigma_{zx} $ in $C$, when $s_{k}=1$.

	We can describe both cases with one equation:
	\begin{align*}	
	P_{k}&=\Sigma_{xx}-s_{k}\Sigma_{xz}\paren{\Sigma_{zz}}^{\dagger}\Sigma_{zx}\text{ in }C\\
	&=\Sigma_{xx}-\gamma_{k}\Sigma_{xz}\paren{\Sigma_{zz}}^{\dagger}\Sigma_{zx}\text{ in }C,
	\end{align*}
	since $s_{k}=\gamma_{k}$ in $C$. But this holds for any event $C=\set{\g{k}=s_{0:k}}$. Thus, the result holds everywhere, since the events $\set{\g{k}=s}$, $s\in \set{0,1}^{2k+1}$ are a partition of the probability space.
\end{proof}
\section{Proofs of results}
\subsection{Proof of Lemma \ref{THM_worst_case}}
We will use induction to prove the formula \eqref{EQN_Lemma_goal_equation} of the lemma.
For $k=k_0$ it is immediate. Suppose it is true for $k-1\ge k_0$. Since in $\EV$ the eavesdropper receives all packets for $k>k_0$, we have $
\gamma_{k}=1$, for $k>k_0$. Hence, according to the recursive formula \eqref{EQN_Conditional_Recursive_formula}, to find $P_{k}$, we should compute the covariance of $x_{k}$ and $\y{k}$, conditioned on $\info{k-1}$. 
By independence of $w_k$ from $\info{k-1}$, it follows that $x_{k}-\expe{x_{k}|\info{k-1}}=A\paren{x_{k-1}-\hat{x}_{k-1}}+w_{k}$.
Now, we claim $\y{k}-\expe{\y{k}|\info{k-1}}=w_k$, for $k>k_0$, which we prove in the next paragraph. Thus, the covariance matrix of $x_{k}$ and $\y{k}$, conditioned on $\info{k-1}$ is:
\begin{align*}&\var{\matr{{c}x_{k}\\\y{k}}|\info{k-1}}=\matr{{cc}AP_{k-1}A'+Q& Q\\Q & Q},\text{ in }\event\cap\EV 
\end{align*}
Thus, by \eqref{EQN_Conditional_Recursive_formula}, for $k>k_0$ we have
\begin{align}
P_{k}&=AP_{k-1}A'+Q-QQ^{-1}Q
=AP_{k-1}A'
\end{align}
in $\event\cap\EV$. Hence, by the induction hypothesis, we get  \eqref{EQN_Lemma_goal_equation}.

Finally, we prove the claim 
\begin{equation}\label{EQN_lemma_worst_case_intermediate}
\y{k}-\expe{\y{k}|\info{k-1}}=w_k,\,\text{for }k>k_0,\text{ in }\event\cap \EV.
\end{equation}
Since the critical event 
happened at time $k_0$, the reference time at $k_0+1$  is $t_{k_{0}+1}=k_{0}$  in $\event$ (equation \eqref{EQN_reference_time}). Hence, all reference times $t_k$ , for $k>k_0$ satisfy $t_k\ge k_0$ in $\event$. There are only two possible cases depending on $\gamma_{u,k-1}$: \\
	\noindent \textit{Case I:} $t_k=k-1\ge k_0$, when $\gamma_{u,k-1}=1$\\
	\noindent \textit{Case II:} $t_k=t_{k-1}\ge k_0$ when $\gamma_{u,k-1}=0$\\
In the former one, the intercepted signal by \eqref{EQN_Coding} is $\y{k}=x_k-Ax_{k-1}=w_{k}$. But the process noise $w_k$ is independent of $\info{k-1}$, thus, $\expe{w_k|\info{k-1}}=0$ and equation \eqref{EQN_lemma_worst_case_intermediate} holds. In the latter one, we have 
\begin{equation}
\y{k}=x_k-A^{k-t_k}x_{t_k}=x_k-A^{k-t_{k-1}}x_{t_{k-1}},
\end{equation}
since $t_k=t_{k-1}$. Adding and subtracting $Ax_{k-1}$ at the right hand side of the above equation, we obtain
\begin{align}
\y{k}&=x_{k}-Ax_{k-1}+A\paren{x_{k-1}-A^{k-1-t_{k-1}}x_{t_{k-1}}}\notag\\
&=w_{k}+A\y{k-1},\label{EQN_Lemma1_help}
\end{align}
where the second equality follows from the definition of $\y{k-1}$ in \eqref{EQN_Coding}. But $k-1>k_{0}$ (since $\gamma_{u,k-1}=0$), thus, the eavesdropper has intercepted $\y{k-1}$, which in turn implies 
$\y{k-1}=\expe{\y{k-1}|\info{k-1}}$ in $\event\cap \EV$ (follows from \eqref{EQN_fundamental_conditional_property} and properties of conditional expectation). Hence, from \eqref{EQN_Lemma1_help},  $\expe{\y{k}|\info{k-1}}=A\y{k-1}$ in $\event\cap \EV$, which along with \eqref{EQN_Lemma1_help} prove equation \eqref{EQN_lemma_worst_case_intermediate}.
\hfill $\qedsymbol$
 \subsection{Monotonicity results.}
  In this subsection, we prove that the case described in Lemma \ref{THM_worst_case}, where the eavesdropper receives everything after the critical event, is indeed the worst case of Theorem \ref{THM_perfect_secrecy}, in terms of confidentiality.
  To formally describe the above statement, we need to define a new channel outcome sequence $\tilde{\gamma}_{u,k}$, $\tilde{\gamma}_{k}$ that is coupled with the original outcome sequence $\gamma_{u,k}$, $\gamma_{k}$ as follows:
  \begin{equation}\label{EQN_outcome_coupling}
  \begin{aligned}
  \paren{\tilde{\gamma}_{u,k},\tilde{\gamma}_{k}}&=\paren{\gamma_{u,k},\gamma_{k}},\,\text{for all }0\le k\le k_0\\
  \paren{\tilde{\gamma}_{u,k},\tilde{\gamma}_{k}}&=\paren{\gamma_{u,k},1},\,\text{for all }k> k_0, 
  \end{aligned}
  \end{equation}
  i.e. in this new outcome sequence, the eavesdropper receives everything after time $k_0$ but the user's outcomes remain the same.
  Similarly to \eqref{EQN_batch_channel_outcomes}, we denote the batch vector of the new outcomes up to time $k$ by $\gg{k}$. 
  Next, we define again the channel outputs \eqref{EQN_channel_model}, the eavesdropper's information \eqref{EQN_eavesdropper_information} and the covariance matrix \eqref{EQN_estimate_variance_eavesdropper}, but with the original channel outcomes $\gamma_{k}$ replaced by $\tilde{\gamma}_{k}$:
  \begin{equation}\label{EQN_coupled_channel_model_information} 
  \tilde{h}_{k}=\left\{
  \begin{aligned}\y{k}, \quad &\text{ if } \tilde{\gamma}_{k}=1\\
  \varepsilon,  \quad &\text{ if } \tilde{\gamma}_{k}=0
  \end{aligned}\right.,\quad\tilde{\mathcal{I}}_k=\set{\tilde{\vct{h}}_{0:k},\vct{\gamma}_{u,0:k}},
  \end{equation}
  \begin{equation}\label{EQN_coupled_channel_covariance} 
  \tilde{P}_k=\cov{x_k|\tilde{\mathcal{I}}_k}.
  \end{equation} 
 The next lemma formally states that when the eavesdropper receives all measurements from some time $k_0$ on, then it has the smallest possible covariance matrix.
 \begin{lemma}[Comparison lemma]\label{THM_comparison_lemma}
 Let $P_k$ be the nominal error covariance matrix, as defined in \eqref{EQN_estimate_variance_eavesdropper} and $\tilde{P}_k$ be the error covariance matrix of the coupled process as defined in  \eqref{EQN_outcome_coupling}--\eqref{EQN_coupled_channel_covariance}. Then, with probability one:
 \begin{equation}\label{EQN_comparison_lemma}
 P_{k}\succeq \tilde{P}_{k},\text{ for all }k\ge 0.
 \end{equation} 
 	\hfill $\diamond$
 \end{lemma}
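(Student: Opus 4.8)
The plan is to show that the coupled information $\tilde{\mathcal{I}}_k$ is, in an appropriate sense, a "sub-information" of the nominal information $\info{k}$ so that conditioning on more information can only reduce the error covariance. The subtlety is that $\info{k}$ and $\tilde{\mathcal{I}}_k$ are not literally nested as sigma-algebras in the naive way, because the coupled sequence $\tilde{\gamma}_k$ is manufactured from $\gamma_k$ only up to $k_0$ and is then forced to $1$; however, conditionally on the user's outcomes $\vct\gamma_{u,0:k}$ (which are common to both), the coupled eavesdropper sees exactly the packets $z_m$ with $\tilde\gamma_m = 1$, and every such packet is a deterministic linear function of $\x{k}$. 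So the strategy is: first fix the channel-outcome realization, reduce to a Gaussian conditioning statement on $\x{k}$, and then invoke the elementary fact that $\var{\x{k}\mid \mathcal{A}} \succeq \var{\x{k}\mid \mathcal{B}}$ whenever $\mathcal{B} \subseteq \mathcal{A}$ as linear observation sets.

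Concretely, I would proceed as follows. First, mirror the development in Appendix~A: by Lemma~\ref{THM_conditional_expectation} (equation~\eqref{EQN_fundamental_conditional_property}) applied to both the nominal and the coupled processes, on the event $\{\g{k}=s,\ \gg{k}=\tilde s\}$ we have $P_k = \var{x_k\mid \ifo{s}}$ and $\tilde P_k = \var{x_k\mid \mathcal{J}_k(\tilde s)}$, where both $\ifo{s}$ and $\mathcal{J}_k(\tilde s)$ are collections of jointly Gaussian linear functionals of $\x{k}$ (the intercepted packets $z_m$). Since the coupling \eqref{EQN_outcome_coupling} leaves the user's outcomes unchanged, the reference times $t_m$ — and hence the packets $z_m$ themselves — are identical in both processes. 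The only difference is \emph{which} packets are retained: nominally the set $\{m\le k : s_m = 1\}$, in the coupled process the set $\{m \le k : \tilde s_m = 1\} = \{m\le k_0 : s_m = 1\} \cup \{m : k_0 < m \le k\}$. These two index sets are in general incomparable, so $\ifo{s}$ is \emph{not} a subset of $\mathcal{J}_k(\tilde s)$.

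This is the main obstacle, and to get around it I would not compare $\ifo{s}$ with $\mathcal{J}_k(\tilde s)$ directly but instead compare each with the augmented set $\ifo{s} \cup \mathcal{J}_k(\tilde s)$, i.e. the information of an eavesdropper who receives every packet that \emph{either} process receives. Write $P_k^{+} = \var{x_k \mid \ifo{s}\cup\mathcal{J}_k(\tilde s)}$. Then trivially $\tilde P_k \succeq P_k^{+}$ and $P_k \succeq P_k^{+}$, but this gives the inequality the wrong way around for $P_k$ versus $\tilde P_k$. The correct route is therefore to establish the set inclusion $\ifo{s} \subseteq \sigma(\mathcal{J}_k(\tilde s))$ after all — which does hold, because of the chaining structure of the code: a retained nominal packet $z_m$ with $m > k_0$ that is \emph{not} retained in the coupled process is impossible (coupled keeps all $m>k_0$), and a retained nominal packet $z_m$ with $m \le k_0$ is retained in the coupled process too (coupling agrees up to $k_0$). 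Hence every element of $\ifo{s}$ is an element of $\mathcal{J}_k(\tilde s)$, so in fact $\ifo{s}\subseteq\mathcal{J}_k(\tilde s)$ as observation sets, giving $\sigma(\ifo{s})\subseteq\sigma(\mathcal{J}_k(\tilde s))$. Consequently $P_k = \var{x_k\mid\ifo{s}} \succeq \var{x_k\mid\mathcal{J}_k(\tilde s)} = \tilde P_k$ on $\{\g{k}=s\}$, using the standard tower/monotonicity property of conditional covariance for Gaussian vectors (as in \cite[p.~230]{durrett2010probability}, already cited for \eqref{EQN_Open_Loop_Optimal}). Finally, since the events $\{\g{k}=s\}$, $s\in\{0,1\}^{2k+2}$, partition $\Omega$, the inequality $P_k \succeq \tilde P_k$ holds almost surely, and since $k\ge 0$ was arbitrary we obtain \eqref{EQN_comparison_lemma}. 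I expect the one genuinely delicate point to be the bookkeeping that justifies $\ifo{s}\subseteq\mathcal{J}_k(\tilde s)$ cleanly — in particular making sure the $m = k_0$ packet and the "no critical event yet" regime $k < k_0$ are handled, and confirming that for $m > k_0$ the nominal process can only retain a subset of what the coupled process retains — after which the rest is a direct appeal to Gaussian conditioning monotonicity.
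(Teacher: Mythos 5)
Your proposal is correct and follows essentially the same route as the paper: fix the channel realization, use the first estimation formula to reduce $P_k$ and $\tilde P_k$ to Gaussian conditional covariances $\var{x_k\mid\ifo{s}}$ and $\var{x_k\mid\ifo{\tilde s}}$, observe that $\ifo{s}\subseteq\ifo{\tilde s}$ (which the paper likewise asserts directly from the definition, proving the monotonicity of Gaussian conditional covariance as a separate Lemma rather than citing it), and conclude by partitioning over $\set{\g{k}=s}$. The mid-proof detour — first asserting the index sets are incomparable and introducing $P_k^{+}$, then correctly retracting this because the nominal retained set is always contained in the coupled one — is unnecessary, but you arrive at the right inclusion and the final argument is sound.
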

  
  Before we prove Lemma \ref{THM_comparison_lemma}, we prove an intermediate result.
 The following lemma compares the estimation error covariance of a Gaussian random vector $x$, given two different Gaussian information sets $\mathcal{J}_1$ and $\mathcal{J}_2$, when the first information set is smaller than the second or $\mathcal{J}_1\subseteq \mathcal{J}_2$. Intuitively, the result states that given more information we have less error. 
\begin{lemma}\label{THM_Monotonicity}
	Let $x\in\REAL^{m}$ be normal and let $\mathcal{J}_1,\,\mathcal{J}_2$ be two sets, the elements of which are vectors that depend linearly on $x$. Then,
	if $\mathcal{J}_1\subseteq \mathcal{J}_2$, we have:
	\begin{equation} \label{EQN_Data_processing_b}
	\var{x|\mathcal{J}_1}\succeq \var{x|\mathcal{J}_2}.
	\end{equation}
\end{lemma}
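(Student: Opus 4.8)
\emph{Proof plan.} The statement is the classical ``more information cannot hurt'' fact for Gaussian estimation, and the cleanest route is via the law of total covariance rather than explicit Schur-complement bookkeeping. First I would reduce to a finite-dimensional Gaussian setting: since every element of $\mathcal{J}_1$ and $\mathcal{J}_2$ is a linear image of $x\in\REAL^m$, the linear span of all these images has dimension at most $m$, so without loss of generality we may stack the elements of $\mathcal{J}_1$ into a single vector $y_1=M_1 x$ and the elements of $\mathcal{J}_2$ into a vector $y_2=M_2 x$ which, because $\mathcal{J}_1\subseteq\mathcal{J}_2$, contains $y_1$ as a subvector. As $x$ is normal and $y_1,y_2$ are linear in $x$, the triple $(x,y_1,y_2)$ is jointly Gaussian; hence $\var{x|\mathcal{J}_1}=\var{x|y_1}$ and $\var{x|\mathcal{J}_2}=\var{x|y_2}$ are \emph{deterministic} matrices (a standard property of the Gaussian conditional law), and $\sigma(\mathcal{J}_1)\subseteq\sigma(\mathcal{J}_2)$.

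Next, write $\hat{x}_i=\expe{x|\mathcal{J}_i}$. Because $\var{x|\mathcal{J}_i}$ is a.s. constant, taking unconditional expectations in its definition gives $\var{x|\mathcal{J}_i}=\expe{(x-\hat{x}_i)(x-\hat{x}_i)'}$. Then I would decompose $x-\hat{x}_1=(x-\hat{x}_2)+(\hat{x}_2-\hat{x}_1)$ and expand the outer product. The cross term $\expe{(x-\hat{x}_2)(\hat{x}_2-\hat{x}_1)'}$ vanishes: both $\hat{x}_1$ and $\hat{x}_2$ are $\sigma(\mathcal{J}_2)$-measurable, so conditioning on $\mathcal{J}_2$ and using $\expe{x-\hat{x}_2|\mathcal{J}_2}=0$ kills it (and its transpose likewise). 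This yields the identity
\[
\var{x|\mathcal{J}_1}=\var{x|\mathcal{J}_2}+\expe{(\hat{x}_2-\hat{x}_1)(\hat{x}_2-\hat{x}_1)'},
\]
and since the last summand is a covariance matrix, hence positive semidefinite, the inequality $\var{x|\mathcal{J}_1}\succeq\var{x|\mathcal{J}_2}$ follows.

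The only points that need care, and which I would state explicitly, are (i) that the conditional covariance of a jointly Gaussian vector does not depend on the realization of the conditioning variables — this is exactly what licenses treating $\var{x|\mathcal{J}_i}$ as a matrix and moving freely between conditional and unconditional expectations — and (ii) the measurability inclusion $\sigma(\mathcal{J}_1)\subseteq\sigma(\mathcal{J}_2)$, which is what makes $\hat{x}_1$ measurable with respect to the finer information and thus forces the cross term to zero. Neither is difficult; the substantive content is just the total-covariance splitting above. (An alternative derivation through the Gram-matrix/Schur-complement formula for $\var{x|y_2}$ also works but requires a pseudoinverse case analysis, so I would avoid it.)
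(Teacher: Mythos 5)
Your proposal is correct and follows essentially the same route as the paper's proof: both arguments use Gaussianity to treat the conditional covariances as constant matrices, decompose $x-\expe{x|\mathcal{J}_1}$ by adding and subtracting $\expe{x|\mathcal{J}_2}$, kill the cross term via the tower property and $\sigma(\mathcal{J}_1)\subseteq\sigma(\mathcal{J}_2)$, and identify the leftover term as a positive semidefinite covariance. The only cosmetic difference is that you phrase the splitting with unconditional expectations (law of total covariance) while the paper conditions on $\mathcal{J}_1$ throughout; given the constancy of the conditional covariances these are interchangeable.
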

\begin{proof}
	Since $x$ is normal and the vectors in the sets $\mathcal{J}_1$, $\mathcal{J}_2$ depend linearly on $x$, the joint distributions of $\paren{x,\,\mathcal{J}_i}$, $i=1,\,2$ are normal. Then, it is a well know property of joint normal distributions that the conditional covariance matrices are constant or $\expe{\var{x|\mathcal{J}_1}}=\var{x|\mathcal{J}_1}$ and $\expe{\var{x|\mathcal{J}_2}}=\var{x|\mathcal{J}_2}$  \cite{anderson2005optimal}. 
	
	Next, 
recall that:
\[\cov{x|\mathcal{J}_1}=\expe{(x-\expe{x|\mathcal{J}_1})(x-\expe{x|\mathcal{J}_1})'|\mathcal{J}_1}. \]
Adding and subtracting $\expe{x|\mathcal{J}_2}$, we obtain the sum of four terms:
\[\cov{x|\mathcal{J}_1}=T_1+T_2+T_2'+T_3, \]
where:
\begin{align*}
T_1=&\expe{\Big(x-\expe{x|\mathcal{J}_2}\Big)\Big(x-\expe{x|\mathcal{J}_2}\Big)'|\mathcal{J}_1}\\
T_2=&\expe{\Big(x-\expe{x|\mathcal{J}_2}\Big)\Big(\expe{x|\mathcal{J}_2}-\expe{x|\mathcal{J}_1}\Big)'|\mathcal{J}_1}\\
T_3=&\mathrm{Cov}\Big\{\expe{x|\mathcal{J}_2}|\mathcal{J}_1\Big\}
\end{align*}
Since $\mathcal{J}_1\subseteq \mathcal{J}_2$, by the tower property:
\begin{align*}
 T_1&=\mathbb{E}\Big\{\mathbb{E}\Big[(x-\expe{x|\mathcal{J}_2})(x-\expe{x|\mathcal{J}_2})'|\mathcal{J}_2\Big]|\mathcal{J}_1\Big\}\\
 &=\mathbb{E}\Big\{\cov{x|\mathcal{J}_2}|\mathcal{J}_1\Big\}=\cov{x|\mathcal{J}_2},
\end{align*}
since $\cov{x|\mathcal{J}_2}$ is constant. Using similar arguments, we can show that $T_2=0$ since:
\begin{align*}
&\mathbb{E}\bigg\{\mathbb{E}\Big[(x-\expe{x|\mathcal{J}_2})(\expe{x|\mathcal{J}_2}-\expe{x|\mathcal{J}_1})'|\mathcal{J}_2\Big]|\mathcal{J}_1\bigg\}\\
&=\mathbb{E}\bigg\{\mathbb{E}\Big[(x-\expe{x|\mathcal{J}_2})|\mathcal{J}_2\Big](\expe{x|\mathcal{J}_2}-\expe{x|\mathcal{J}_1})'|\mathcal{J}_1\bigg\}\\
&=\expe{0|\mathcal{J}_1}=0,
\end{align*}
where the first equality holds since $\expe{x|\mathcal{J}_2}-\expe{x|\mathcal{J}_1}$ is measurable with respect to $\sigma\paren{\mathcal{J}_2}$. Finally, $T_3\succeq 0$ almost surely, which completes the proof.
\end{proof}
\begin{proof}[Proof of Lemma \ref{THM_comparison_lemma}]
	Fix some time $k\ge 0$. If $k\le k_0$ then equation \eqref{EQN_comparison_lemma} is trivially satisfied with equality. Since the channel outcomes are identical up to time $k_0$, then also $P_k=\tilde{P}_k$, for $k\le k_0$. 
	
	Suppose that $k>k_0$.
	 Fix the original channel outcomes to be $\g{k}=s=\paren{s_{u,0},\dots,s_{u,k},s_{0},\dots,s_{k}}$, for some $s\in\set{0,1}^{2k+2}$.  Then due to the coupling we have: $$\gg{k}=\tilde{s}=\paren{s_{u,0},\dots,s_{u,k},s_{0},\dots,s_{k_0},1\dots,1}.$$
	But from the definition \eqref{EQN_induced_info}, it follows that the set $\ifo{s}$ is included in $\ifo{\tilde s}$. Thus, we have $\ifo{s}\subseteq \ifo{\tilde s}$, where both sets have elements that depend linearly on the Gaussian $\x{k}$. Hence, if we apply Lemma \ref{THM_Monotonicity}, with $x=\x{k}$, $\mathcal{J}_1=\ifo{ s}$ and $\mathcal{J}_2=\ifo{\tilde s}$, we obtain that $\var{\x{k}|\ifo{ s}}\succeq \var{\x{k}|\ifo{\tilde s}}$, which also implies that the same relation holds for the $n\times n$ submatrices $\var{x_k|\ifo{ s}}\succeq \var{x_k|\ifo{\tilde s}}$. Thus, by \eqref{EQN_fundamental_conditional_property}, we have $P_k\succeq  \tilde{P}_k$ in $\set{\g{k}=s}$, for all $s\in\set{0,1}^{2k+2}$. Since the sets $\set{\g{k}=s}$ cover all the probability space, we obtain that $P_{k}\succeq \tilde{P}_k$ with probability one. 
\end{proof}
\subsection{Proof of Theorem \ref{THM_perfect_secrecy}} 
Let us first prove equation \eqref{EQN_big_o_result}. From the worst case Lemma \ref{THM_worst_case}, we obtain that $P_{k}=A^{k-k_0}P_{k_{0}}\paren{A'}^{k-k_0}$ in $\event\cap \EV$. Here, we prove that for all $k\ge k_0$:
\begin{equation}\label{EQN_main_proof_goal}
P_{k}\succeq A^{k-k_0}P_{k_{0}}\paren{A'}^{k-k_0}\text{ in }\event.
\end{equation} 
From Lemma \ref{THM_comparison_lemma}, we obtain that with probability one:
\[P_k\succeq \tilde{P}_k, \]
where $\tilde{P}_k$ is the error covariance matrix of the coupled channel outcome sequence as defined in \eqref{EQN_outcome_coupling}--\eqref{EQN_coupled_channel_covariance}.
What remains to show is that in $\event$, we have  $\tilde{P}_k=A^{k-k_0}P_{k_{0}}\paren{A'}^{k-k_0}$. Notice that since $\tilde{\gamma}_{u,k_0}=\gamma_{u,k_0}$ and $\tilde{\gamma}_{k_0}=\gamma_{k_0}$, the following events are equal: $$\tilde{\event}=\set{\tilde{\gamma}_{u,k_0}=1,\tilde{\gamma}_{k_0}=0}=\event.$$
 Also observe that since $\tilde{\gamma}_{k}=1$, $k>k_0$, the event $$\tilde{\EV}=\set{\tilde{\gamma}_{k}=1,\,\text{for all }k\ge k_0+1}=\Omega$$
is the whole probability space. Thus, applying Lemma \ref{THM_worst_case} for $\g{k}$ replaced with $\gg{k}$ and the events $\tilde{\event}$ and $\tilde{\EV}$, we obtain $\tilde{P}_{k}=A^{k-k_0}\tilde{P}_{k_{0}}\paren{A'}^{k-k_0}$, for $k\ge k_0$ in $\event$. But since $\g{k_0}$ and $\gg{k_0}$ are identical, we have $\tilde{P}_{k_0}=P_{k_0}$. 
Combining the two previous results, we 
 prove \eqref{EQN_main_proof_goal}.
 
 Next, we show that $P_{k_0}\succeq Q$ or $P_{k_0}= \Sigma_0$ in $\event.$ From the conditional expectation formula \eqref{EQN_Conditional_Recursive_formula} for $P_{k_0}$, we get that when the packet is lost  ($\gamma_{k_0}=0$), then the estimation error covariance is equal to the prediction error covariance or $P_{k_{0}}=\Sigma_{xx}$. Hence by Proposition \ref{THM_recursive}, if $k_0>0$, we have $P_{k_{0}}=AP_{k_{0}-1}A'+Q\succeq Q\succ 0$  in $\event$, while if $k_0=0$, we have $P_0=\Sigma_0\succ 0$ in $\event$.

Now, suppose that $v$ is a left-eigenvector of $A$ corresponding to the spectral radius $\rho\paren{A}$. Then, pre-multiplying and post-multiplying  by $v$ in \eqref{EQN_main_proof_goal}, we obtain: \[v'P_kv\ge \rho\paren{A}^{2\paren{k-k_0}}v'P_{k_0}v.\] 
But either $P_{k_{0}}\succeq Q$ or $P_{k_{0}}=\Sigma_0$, which implies
\[
v'P_kv\ge c \rho\paren{A}^{2\paren{k-k_0}} v'v,
\]
where $c=\min\set{\lambda_{min}(Q),\lambda_{min}(\Sigma_0)}>0$ and $\lambda_{min}(\cdot)$ denotes the smallest eigenvalue.
This, in turn, proves \eqref{EQN_big_o_result} since 
\[\Tr{P_k}\ge \lambda_{max}(P_k)\ge \frac{v'P_kv}{v'v}, \]
where $\lambda_{max}(\cdot)$ denotes the largest eigenvalue.

To prove that the coding scheme achieves perfect secrecy, notice that the user always knows $x_{t_{k}}$ and, thus, she can completely reconstruct the states $x_k$, when $\gamma_{u,k}=1$. But  this exactly implies that the condition \eqref{EQN_optimal_estimation_user} of perfect secrecy is satisfied, since $P_{u,k}=0=\cov{x_k|\vct{x}_{0:k}}$, when $\gamma_{u,k}=1$ (recall that $y_k=x_k$ in the case of state measurements). 
Finally, \eqref{EQN_big_o_result} along with the hypothesis \eqref{EQN_theorem_condition} prove that $\Tr{P_{k}}\rightarrow\infty$ with probability one.  \hfill $\qedsymbol$
\subsection{Proof of Corollary \ref{THM_rate_of_increase}}
Let $e_j$ denote the unit vector of $\REAL^{n}$ in the $j$-th direction. Then \[e_j'A^iQ(A^j)'e_j\le \rho\paren{Q} \norm{A^ie_j'}^2_2\le \rho\paren{Q} \norm{A^i}^2_2,\]
where $\norm{A^i}_2$ is the Euclidean matrix norm.
Repeating the procedure for all $j$, we obtain the inequality:
\begin{align*} 
\Tr P^{op}_k&=\sum_{i=0}^{k-1}\Tr \paren{A^i Q (A')^{i}}+\Tr\paren{ A^k\Sigma_0 (A')^k}\\
&\le n c_1 \sum_{i=0}^{k}\norm{A^k}^2_2,
\end{align*}
where $c_1=\max\set{\rho(Q),\rho(\Sigma_0)}$. But by the Gelfand's formula \cite{horn2012matrix}, $\norm{A^k}_2\sim \rho(A)^{k}$ asymptotically, which implies that there exists a constant $c_2>0$ such that:
\begin{align*}
\Tr P^{op}_k&\le c_2 \sum_{i=0}^{k}\rho(A)^{2k}=c_2\frac{\rho(A)^{2(k+1)}-1}{\rho(A)^{2}-1}.
\end{align*}
On the other hand, suppose that the first critical event occurs at $k_0$. Then, from Theorem \ref{THM_perfect_secrecy}, we obtain:
\[ \Tr P_{k}\ge c_3 \rho(A)^{2\paren{k-k_0}}, \text{ for }k\ge k_0, \]
for some constant $c_3>0$ independent of $k_0$.
Combining the previous two inequalities, we obtain:
\begin{equation}
\Tr P^{op}_k \le c \rho(A)^{-2k_0} \paren{\Tr P_k+1}
\end{equation}
for some constant $c>0$ independent of $k_0$.  
\hfill $\qed$
\subsection{Proof of Lemma \ref{THM_output_lemma}}
Consider equation \eqref{EQN_system_local_Kalman}. Since the weighted innovation sequence $K\bar{w}_k$ is white noise with covariance \[\bar{Q}=K\paren{C\bar{P}C'+R}K',\] the steps of the proof of Theorem~\ref{THM_perfect_secrecy} can be repeated.
	First, applying Lemma \ref{THM_worst_case} to system  \eqref{EQN_system_local_Kalman} with coding scheme \eqref{EQN_ouput_coding} (with $\bar{x}_k$, $\est_k$, $\estm_k$, $\bar{Q}$ instead of $x_k$, $\hat{x}_k$, $P_k$, $Q$), we obtain:
	 \begin{equation}\label{EQN_output_lemma_part1}
	 \estm_{k}=A^{k-k_0}\estm_{k_{0}}\paren{A'}^{k-k_0},\,\text{for }k\ge k_0,\text{ in } \event\cap\EV, 
	 \end{equation}
	where the events $\event,$ $\EV$ are defined in \eqref{EQN_event_B}, \eqref{EQN_event_C}.

	Second, similar to \eqref{EQN_main_proof_goal} in the proof of Theorem \ref{THM_perfect_secrecy}, the previous equality holds with inequality in $\event$ or:
	\begin{equation}\label{EQN_output_lemma_proof_help_1}
	\estm_{k}\succeq A^{k-k_0}\estm_{k_{0}}\paren{A'}^{k-k_0}\text{ in } \event.
	\end{equation}
	Third, since the critical event $\event$ occurs at $k_0$ then we have:
	\begin{enumerate}
		\item $\estm_{k_0}=A\estm_{k_0-1}A'+\bar{Q}\succeq \bar{Q}$ if $k_0>0$
		\item $\estm_{k_0}=\var{\bar{x}_0}=\bar{Q}$ if $k_0=0$.
	\end{enumerate}

	The part that is different is proving that the trace grows unbounded.
	Let $v$ be the (nonzero) left-eigenvector of $A$ corresponding to the spectral radius $\rho
	\paren{A}$. Pre-multiplying and post-multiplying  by $v$ in \eqref{EQN_output_lemma_proof_help_1}, we obtain: \[v'\estm_k v\ge \rho\paren{A}^{2\paren{k-k_0}}v'\estm_{k_0}v.\] 
	But either $\estm_{k_0}\succeq \bar{Q}$ or $\estm_{k_0}=\bar{Q}$, which implies
	\[v'\estm_k v\ge c\rho^{2\paren{k-k_0}}v'v, \]
	where 
	\[c=\frac{v'\bar{Q}v}{v'v}\ge 0.\]
	Since we have \[\Tr \estm_k\ge \lambda_{\max}(\estm_k)\ge \frac{v'\estm_k v}{v'v},\]
	we obtain \eqref{EQN_rate_lemma_output_secondary} if we show that $c$ is strictly positive or $v'\bar{Q} v> 0$. We will argue by contradiction. Suppose that $v'\bar{Q}v=0$. Substituting this in equation \eqref{EQN_DARE} we then have:
	\begin{align*} v'\bar{P}v&= \rho\paren{A}^2 v'\bar{P}v+v'Qv-\rho\paren{A}^2 v'\bar{Q}v\\
	&=\rho\paren{A}^2 v'\bar{P}v+v'Qv,
	\end{align*}
	which is a contradiction since $Q\succ 0$ and $v'Qv> 0$, while $\set{1-\rho\paren{A}^2}v'\bar{P}v\le 0$, since the system is unstable. 
	Thus, $v'\bar{Q} v> 0$ and $c>0$.
\hfill $\qed$
\subsection{Proof of Theorem \ref{THM_output}}
First we prove \eqref{EQN_output_main_result_asymptotics}. By optimality of linear estimation  we have that the estimate $\est_k$ minimizes the conditional error or:
\[\est_k=\argmin\limits_{x\in \sigma\paren{\info{k}}} \expe{\norm{\bar{x}_k-x}^2_2|\info{k}},\] 
where $x\in \sigma(\info{k})$ denotes that $x$ is measurable with respect to $\sigma(\info{k})$. Thus, if we replace $\est_k$ with $\hat{x}_k$, we obtain the following bound:
\begin{equation}
\Tr\estm_k=\expe{\norm{\bar{x}_k-\est_k}^2_2|\info{k}}\le \expe{\norm{\bar{x}_k-\hat{x}_k}^2_2|\info{k}} .
\end{equation}
Then by the inequality $\norm{a-b}_2^2\le 2\norm{a-c}_2^2+2\norm{b-c}_2^2$, we obtain:
\begin{align}
\Tr\estm_k&\le 2\expe{\norm{x_k-\hat{x}_k}^2_2|\info{k}}+ 2\expe{\norm{x_k-\bar{x}_k}^2_2|\info{k}}\nonumber\\ \label{EQN_Ouput_Proof_help_1}
&=2\Tr P_k+2\Tr \paren{\bar{P}-KC\bar{P}}.
\end{align}
The first term of the equality above holds by the equality $\Tr P_k=\expe{\norm{x_k-\hat{x}_k}^2_2|\info{k}}$ from the definition of $P_k$ in~\eqref{EQN_estimate_variance_eavesdropper}. To show the equality about the second term in \eqref{EQN_Ouput_Proof_help_1}, by the tower property:
\begin{align*}
\expe{\norm{x_k-\bar{x}_k}^2_2|\info{k}}&=\expe{\expe{\norm{x_k-\bar{x}_k}^2_2|\vct{y}_{0:k},\g{k}}|\info{k}}
\end{align*}
since $\sigma\paren{\info{k}}\subseteq \sigma\paren{\vct{y}_{0:k},\g{k}}$. But by independence $\expe{\norm{x_k-\bar{x}_k}^2_2|\vct{y}_{0:k},\g{k}}=\expe{\norm{x_k-\bar{x}_k}^2_2|\vct{y}_{0:k}}$. The right-hand expression  denotes the estimation error covariance of the Kalman filter which by assumption is assumed to be at steady state and hence equals $\bar{P}-KC\bar P$, where $\bar P$ is the steady state prediction error covariance. 
As a result,
\begin{align*}
\expe{\norm{x_k-\bar{x}_k}^2_2|\info{k}}&= \expe{\Tr \paren{\bar{P}-KC\bar{P}}|\info{k}}\\&=\Tr \paren{\bar{P}-KC\bar{P}}.
\end{align*}
which verifies \eqref{EQN_Ouput_Proof_help_1}. Finally substituting the result of Lemma~\ref{THM_output_lemma} in \eqref{EQN_Ouput_Proof_help_1} we have: 
\begin{equation*}
\Tr P_k \ge \frac{1}{2}c\rho(A)^{2\paren{k-k_0}}-\Tr\paren{\bar{P}-KC\bar{P}},\,\text{ in }\event,
\end{equation*}
for $k\ge k_0$ and some $c>0$ independent of $k_0$, which proves~\eqref{EQN_output_main_result_asymptotics}.

To prove that perfect secrecy is achieved (statement (i) of the Theorem), notice first that the user's performance is optimal. At every successful reception time $k$, the user receives $\bar{x}_k-A^{k-t_k}\bar{x}_{t_k}$ and recovers $\bar{x}_k$. As a result, the user knows $\bar{x}_k$ at the successful reception times. But from \eqref{EQN_local_Kalman_estimate}, \eqref{EQN_local_Kalman_covariance} this is exactly the optimal estimation scheme as defined in \eqref{EQN_optimal_estimation_user}.
The fact that the eavesdropper's error diverges to infinity almost surely follows from \eqref{EQN_output_main_result_asymptotics} and the hypothesis \eqref{EQN_output_hypothesis}.
\hfill $\qed$
\ifCLASSOPTIONcaptionsoff
  \newpage
\fi
\bibliographystyle{IEEEtran}
\bibliography{IEEEabrv,Paper_wireless_secrecy}
\end{document}